\newcommand{\ignore}[1]{}
\begin{document}

\title{Algebraic Global Gadgetry for Surjective Constraint Satisfaction}

\author{Hubie Chen\\King's College London\\\texttt{hubie.chen@kcl.ac.uk}}

\date{ } 

\maketitle

\newcommand{\confversion}[1]{}
\newcommand{\longversion}[1]{#1}

\newtheorem{theorem}{Theorem}[section]
\newtheorem{conjecture}[theorem]{Conjecture}
\newtheorem{corollary}[theorem]{Corollary}
\newtheorem{proposition}[theorem]{Proposition}
\newtheorem{prop}[theorem]{Proposition}
\newtheorem{lemma}[theorem]{Lemma}
\newtheorem{remarkcore}[theorem]{Remark}
\newtheorem{exercisecore}[theorem]{Exercise}
\newtheorem{examplecore}[theorem]{Example}
\newtheorem{assumptioncore}[theorem]{Assumption}

\newenvironment{assumption}
  {\begin{assumptioncore}\rm}
  {\hfill $\Box$\end{assumptioncore}}

\newenvironment{example}
  {\begin{examplecore}\rm}
  {\hfill $\Box$\end{examplecore}}

\newenvironment{exercise}
 {\begin{exercisecore}\rm}
 {\hfill $\Box$\end{exercisecore}}

\newenvironment{remark}
 {\begin{remarkcore}\rm}
 {\hfill $\Box$\end{remarkcore}}

\newenvironment{proof}{\noindent\textbf{Proof\/}.}{$\Box$ \vspace{1mm}}

\newtheorem{researchq}{Research Question}

\newtheorem{newremarkcore}[theorem]{Remark}

\newenvironment{newremark}
  {\begin{newremarkcore}\rm}
  {\end{newremarkcore}}

\newtheorem{definitioncore}[theorem]{Definition}

\newenvironment{definition}
  {\begin{definitioncore}\rm}
  {\end{definitioncore}}

\newcommand{\ppequiv}{\mathsf{PPEQ}}
\newcommand{\eq}{\mathsf{EQ}}
\newcommand{\iso}{\mathsf{ISO}}
\newcommand{\ppeq}{\ppequiv}
\newcommand{\ppiso}{\mathsf{PPISO}}
\newcommand{\boolppiso}{\mathsf{BOOL}\mbox{-}\mathsf{PPISO}}
\newcommand{\csp}{\mathsf{CSP}}
\newcommand{\scsp}{\mathsf{SCSP}}
\newcommand{\cond}{\mathsf{COND}}
\newcommand{\gi}{\mathsf{GI}}
\newcommand{\ci}{\mathsf{CI}}
\newcommand{\rela}{\mathbf{A}}
\newcommand{\relb}{\mathbf{B}}
\newcommand{\relc}{\mathbf{C}}
\newcommand{\reld}{\mathbf{D}}
\newcommand{\relg}{\mathbf{G}}
\newcommand{\relh}{\mathbf{H}}
\newcommand{\reln}{\mathbf{N}}

\newcommand{\rele}{\mathbf{E}}

\newcommand{\alga}{\mathbb{A}}
\newcommand{\algb}{\mathbb{B}}
\newcommand{\algab}{\mathbb{A}_{\relb}}

\newcommand{\idemp}{I}

\newcommand{\varv}{\mathcal{V}}
\newcommand{\variety}{\mathcal{V}}
\newcommand{\false}{\mathsf{false}}
\newcommand{\true}{\mathsf{true}}
\newcommand{\pol}{\mathsf{Pol}}
\newcommand{\inv}{\mathsf{Inv}}
\newcommand{\cc}{\mathcal{C}}
\newcommand{\alg}{\mathsf{Alg}}
\newcommand{\pitwo}{\Pi_2^p}
\newcommand{\sigmatwo}{\Sigma_2^p}
\newcommand{\pithree}{\Pi_3^p}
\newcommand{\sigmathree}{\Sigma_3^p}

\newcommand{\fancyg}{\mathcal{G}}
\newcommand{\tw}{\mathsf{tw}}

\newcommand{\mc}{\mathsf{MC}}
\newcommand{\mcs}{\mathsf{MC}_s}

\newcommand{\mcb}{\mathsf{MC_b}}

\newcommand{\qc}{\mathrm{QC}}
\newcommand{\normqc}{\mathrm{norm\mbox{-}QC}}
\newcommand{\rqc}{\mathsf{RQC\mbox{-}MC}}

\newcommand{\qcfo}{\mathrm{QCFO}}
\newcommand{\qcfofk}{\qcfo_{\forall}^k}
\newcommand{\qcfoek}{\qcfo_{\exists}^k}

\newcommand{\fo}{\mathrm{FO}}
\newcommand{\fok}{\mathrm{FO}^k}

\newcommand{\tup}[1]{\overline{#1}}

\newcommand{\nn}{\mathsf{nn}}
\newcommand{\bush}{\mathsf{bush}}
\newcommand{\width}{\mathsf{width}}

\newcommand{\un}{N^{\forall}}
\newcommand{\en}{N^{\exists}}

\newcommand{\ord}{\tup{u}}
\newcommand{\ordp}[1]{\tup{#1}}

\newcommand{\gc}{G^{-C}}

\newcommand{\ar}{\mathrm{ar}}
\newcommand{\free}{\mathsf{free}}
\newcommand{\vars}{\mathsf{vars}}

\newcommand{\qed}{}

\newcommand{\f}{\mathcal{F}}

\newcommand{\pow}{\wp}

\newcommand{\N}{\mathbb{N}}

\newcommand{\param}[1]{\mathsf{param}\textup{-}#1}

\newcommand{\dom}{\mathsf{dom}}

\newcommand{\org}{\mathrm{org}^+}
\newcommand{\lay}{\mathrm{lay}^+}

\newcommand{\und}[1]{\underline{#1}}

\newcommand{\clo}{\mathsf{closure}}

\newcommand{\thick}{\mathsf{thick}}
\newcommand{\thickl}{\thick_l}
\newcommand{\localthickl}{\mathsf{local}\textup{-}\thickl}
\newcommand{\quantthickl}{\mathsf{quant}\textup{-}\thickl}

\newcommand{\lowerdeg}{\mathsf{lower}\textup{-}\mathsf{deg}}

\newcommand{\restrict}{\upharpoonright}

\renewcommand{\nu}[1]{\textup{{\small $\mathsf{nu}$}-{$ #1 $}}}

\newcommand{\case}[1]{\textup{{\small $\mathsf{case}$}-{$ #1 $}}}

\newcommand{\coclique}{\textup{{\small  $\mathsf{co}$}-{\small $\mathsf{CLIQUE}$}}}
\newcommand{\clique}{\textup{{\small $\mathsf{CLIQUE}$}}}

\newcommand{\caseclique}{\case{\clique}}
\newcommand{\casecoclique}{\case{\coclique}}

\newcommand{\fpt}{\textup{\small $\mathsf{FPT}$}}
\newcommand{\wone}{\textup{\small $\mathsf{W[1]}$}}
\newcommand{\cowone}{\textup{\small $\mathsf{co}$-$\mathsf{W[1]}$}}

\renewcommand{\S}{\mathcal{S}}
\newcommand{\G}{\mathcal{G}}

\newcommand{\image}{\mathsf{image}}

\begin{abstract}
\begin{quote}
The constraint satisfaction problem (CSP)
on a finite
relational structure $\relb$ is to decide,
given a set of constraints on variables where the relations 
come from $\relb$, whether or not there is a assignment
to the variables satisfying all of the constraints;
the surjective CSP is the variant where one decides
the existence of a surjective satisfying assignment
onto the universe of $\relb$.

We present an algebraic framework for proving hardness
results on surjective CSPs; essentially, this framework
computes global gadgetry that permits one to present
a reduction from a classical CSP to a surjective CSP.
We show how to derive a number of hardness results
for surjective CSP in this framework, 
including the hardness of 
the disconnected cut problem,
of the no-rainbow 3-coloring problem,
and of the surjective CSP on all
2-element structures known to be intractable
(in this setting).
Our framework thus allows us to
unify these hardness results, 
and reveal common structure
among them; we believe 
that our hardness proof for the 
disconnected cut problem
is more succinct than the original.
In our view, the framework also makes very transparent
a way in which classical CSPs can be reduced to surjective CSPs.
\end{quote}
\end{abstract}

\section{Introduction}

\subsection{Background}

The \emph{constraint satisfaction problem (CSP)} is 
a computational problem in which one is to decide,
given a set of constraints on variables, whether or not
there is an assignment to the variables satisfying all of the
constraints.
This problem appears in many guises throughout computer science,
for instance, in database theory, artificial intelligence, and
the study of graph homomorphisms.
One obtains a rich and natural family of problems by
defining, for each relational structure $\relb$,
the problem $\csp(\relb)$ to be the case of the CSP where
the relations used to specify constraints must come from $\relb$.
Recall that a relational structure consists of a \emph{universe},
a set---assumed to be finite throughout this article---and 
indexed relations over this universe; 
we will refer to a relational structure whose universe has size
$k$ as a \emph{$k$-element structure}.
Let us refer to a problem of the form $\csp(\relb)$ as a
\emph{classical CSP}.  In a celebrated development,
Bulatov~\cite{Bulatov17-dichotomy} and Zhuk~\cite{Zhuk17-dichotomy}, independently, classified
the complexity of all classical CSPs, showing each to be either polynomial-time tractable or NP-complete; the algorithmic and
complexity aspects of this problem family, and their many links to 
numerous regions of theoretical computer science, had drawn significant
attention over the preceding two decades or so~\cite{KrokhinZivny17-dagstuhlfollowup}.

A natural variant of the CSP is the problem
where an instance is again a set of constraints,
but the question is to decide whether or not there is a
\emph{surjective} satisfying assignment to the variables.
For each relational structure $\relb$, 
in analogy to the definition of the problem $\csp(\relb)$,
one can define
the \emph{surjective CSP} over $\relb$, 
denoted by $\scsp(\relb)$, 
as the case of this variant where the relations used
to specify constraints must come from $\relb$; then,
the question is to decide if there exists
a satisfying assignment that is surjective onto
the universe of $\relb$.
Let us refer to a problem of the form $\scsp(\relb)$
as a \emph{surjective CSP}; each such problem is
readily seen to be in NP.
We remark that such a problem $\scsp(\relb)$ can 
be formulated as the problem of deciding, given as input a
relational structure $\rela$, whether or not
$\rela$ admits a \emph{homomorphism} to $\relb$
that is surjective onto the universe of $\relb$.
Surjective CSPs are studied and relevant in a number of contexts:
\begin{itemize}
\item In graph theory, there is a literature studying 
  the problems $\scsp(\relh)$, over graphs $\relh$ (see for example~\cite{BodirskyKaraMartin12-surjectivesurvey,GolovachPS12-surjective-trees,MartinPaulusma15-discut,GolovachJMPS19-surjective-new-hardness,LaroseMartinPaulusma19-surjective-reflexive-digraphs} and the references therein).
  Typically, in this literature such a problem is
  formulated as a homomorphism problem, and referred to
  as the \emph{surjective $H$-coloring} problem.

Here, we want to introduce a particular such problem,
the \emph{disconnected cut} problem: 
given a graph $G$, decide if it has a cut that is disconnected;
that is, decide if there exists a non-empty, proper subset $U$ of its
vertex set $V$ such that the induced subgraphs
$G[U]$ and $G[V\setminus U]$ are disconnected.
This problem arises in numerous contexts~\cite[Section 1]{MartinPaulusma15-discut},
including the theory of vertex cuts and graph connectivity,
graph partitions, and vertex covers of graphs;
it is equivalent to the problem $\scsp(\relh)$
where $\relh$ is the reflexive $4$-cycle.
This problem was proved to be NP-complete by Martin and Paulusma~\cite{MartinPaulusma15-discut}.

\item In the theory of hypergraphs, 
  colorings of hypergraphs with varied edge types
  have been studied.
  For example, 
  \emph{mixed hypergraphs}~\cite{Voloshin95-on-upper-chromatic-number,KralKPV06-coloring-mixed-hypertrees}
  consist of
  \emph{edges} and \emph{co-edges}; an assignment of colors 
  to vertices is considered to properly color an edge when
  two vertices therein are mapped to different colors,
  and to properly color a co-edge when 
  two vertices therein are mapped to the same color.
  
  Here, we wish to introduce the \emph{no-rainbow $3$-coloring} problem:
  given a hypergraph consisting of size $3$ co-edges,
  decide if there is a coloring with $3$ colors in its image.
  Equivalently, one is given a $3$-uniform hypergraph,
  and the question is whether or not there is a vertex coloring,
  with image size $3$, such that for each edge, 
  it is not the case that the three vertices therein are
  mapped to three different colors---that is, 
  no edge yields a rainbow.
  This problem is equivalent to the problem $\scsp(\reln)$
  where $\reln$ is the structure with universe $N = \{ 0, 1, 2 \}$
  and a single ternary relation 
  $\{ (a,b,c) \in N^3 ~|~ \{ a, b, c \} \neq N \}$;
  it was highlighted as an open problem in a survey on surjective CSPs~\cite{BodirskyKaraMartin12-surjectivesurvey},
  and this relation arises naturally in universal algebra
  within the theory of maximal clones~\cite{Pinsker02-maximal-clones}.
  This problem was recently proved to be NP-complete by 
  Zhuk~\cite{Zhuk20-norainbow}.

\item Surjective homomorphisms appear naturally in the linear-algebraic theory of homomorphism-related combinatorial quantities that was pioneered by Lov\'{a}sz~\cite{Lovasz67-operations-structures,Lovasz12-book-networks-and-limits,CurticapeanDellMarx17-homomorphisms-good-basis,ChenCurticapeanDell19-counting}.

\item In the context of constraint satisfaction, surjectivity is a form of \emph{global cardinality constraint};
such constraints are studied heavily in constraint processing
and in constraint programming, and have been investigated
from the complexity-theoretic viewpoint~\cite{SamerS11-global-cardinality-constraint,BulatovMarx10-global}.
\end{itemize}

The natural research program in this area
is to work towards the obtention of a full classification of all surjective CSPs,
akin to the classification of all classical CSPs.
In addition to providing information on the complexity 
of individual surjective CSPs of interest,
such a classification could provide a rich source of problems
from which one could try to reduce in developing NP-hardness
proofs.\footnote{
This has been done in the preceding literature: complexity hardness results on 
surjective CSPs were used crucially in a complexity classification of rooted phylogeny problems~\cite[Section 5]{BodirskyMueller11-phylogeny}.
Complexity hardness results on classical CSPs have long been used to derive further hardness results; indeed, we show hardness results on surjective CSPs in this article by reducing from classical CSPs!

}
  Moreover, the classification of classical CSPs
is appealing in that the criterion delineating the
two complexity behaviors~\cite{Bulatov17-dichotomy,Zhuk17-dichotomy,ChenLarose17-metaquestions}
is a relatively simple algebraic one which robustly
admits multiple natural characterizations 
(and which possesses relatively low complexity!);
one could hope for such a criterion in the context of
surjective CSPs.
Furthermore, studying the complexity of classical CSPs led to a rich general theory having interactions with many other areas of theoretical computer science, and one might hope for a similar situation
with surjective CSPs.  Regarding known classifications,
a dichotomy theorem that classifies the complexity 
of each surjective CSP on a $2$-element structure
has been known for many years~\cite[Theorem 6.12]{CreignouKhannaSudan01-boolean}, but a classification
for all $3$-element structures is currently open.
Here, we can mention that optimization and counting flavors
of the surjective CSP have also been studied~\cite{FullaUppmanZivny19-boolean-surjective-valued-csps,FockeGZ19-counting-surjhoms-and-compactions,ChenCurticapeanDell19-counting}.

A striking, frustrating, and somewhat mysterious aspect 
of the literature on surjective CSPs thus far is
the apparent difficulty of establishing intractability results;
this has been mentioned on a number of occasions (Sections 1 of~\cite{BodirskyKaraMartin12-surjectivesurvey,GolovachJMPS19-surjective-new-hardness,LaroseMartinPaulusma19-surjective-reflexive-digraphs}).
In particular, the various techniques used for proving intractability
seem disparate, ad hoc, and not obviously amenable to 
generalization.  As barometers of this claim, 
consider that the intractability of the \emph{disconnected cut}
problem was the subject of an entire article~\cite{MartinPaulusma15-discut},
and that this article as well as the one~\cite{Zhuk20-norainbow} exhibiting
the intractability of \emph{no-rainbow $3$-coloring} 
do not explicitly develop machinery nor technology that
permits the derivation of other hardness results in the literature.
The situation here can be contrasted with even the early work on
classical CSPs, where results tended to address
classes of relational structures~\cite{KrokhinBJ03-csp-short-survey}, 
as opposed to individual relational structures.
The present author did present an algebraic 
sufficient condition of hardness for surjective CSPs~\cite{Chen14-hardness-surjective-csp};
while this condition applies to each $2$-element structure
having a hard surjective CSP, it provably
does not apply to prominent examples such as \emph{disconnected cut} and \emph{no-rainbow $3$-coloring}.\footnote{
This inapplicability 
is proved in Section~\ref{subsect:diagonal-cautious}.
}

\subsection{Contributions}

What seems to be lacking in this research area is unified, general technology
for proving hardness results.
In this article, we make a contribution in this direction
by presenting a framework wherein one can establish 
reductions from classical CSPs to surjective CSPs.
We derive the following frontier results using our framework:
\begin{itemize}
\item The NP-hardness of the \emph{disconnected cut} problem (Section~\ref{subsect:disconn-cut}).
\item The NP-hardness of the \emph{no-rainbow $3$-coloring} problem (Section~\ref{subsect:no-rainbow}).
\item The NP-hardness of each problem $\scsp(\relb)$
that is intractable according to the mentioned dichotomy theorem
on $2$-element structures $\relb$ (Section~\ref{subsect:diagonal-cautious}).

\end{itemize}
We believe that our hardness proof for the \emph{disconnected cut} problem is
more succinct than the original, 
not just literally, but also in a conceptual sense;
 consider that the original proof centrally
involved reasoning about the \emph{diameter}
of produced instance graphs, whereas such reasoning is fully absent 
from our proof.  The third hardness result named here is 
obtained by showing that the 
present author's previous algebraic sufficient condition for hardness~\cite{Chen14-hardness-surjective-csp}
can be obtained within the framework of the present article.

Our framework (presented in Section~\ref{sect:framework}) can
be described on a high-level as follows.
Let $\relb$ be a structure.  
When the framework is applicable to 
the structure $\relb$, it permits,
for any finite set $V$ of variables,
the computation of an instance $\Phi_V$
of $\scsp(\relb)$ with the following properties:
each \emph{surjective} satisfying assignment of $\Phi_V$
encodes an assignment $g: V \to D$, 
where $D$ is a finite set (which is uniform over all $V$);
and, conversely, each assignment $g: V \to D$
is encoded by a surjective satisfying assignment of $\Phi_V$.
Thus, the instance $\Phi_V$, in essence, encodes
the space $G_{V,D}$ of all assignments $g: V \to D$.
Then, adding constraints to $\Phi_V$ places constraints
on the allowable assignments in this space $G_{V,D}$;
in this way, the problem $\scsp(\relb)$
can simulate---formally, admit a reduction from---a
problem $\csp(\reld)$, where $\reld$ is a
structure with universe $D$ and appropriately
defined relations.

We conceive of the instances $\Phi_V$ 
as a form of \emph{global gadgetry}
or of scaffolding that, once configured, 
enables the use of \emph{local gadgetry} 
encoding individual constraints.  
We believe that our framework makes very transparent
\emph{why} and \emph{how} surjective CSPs permit
reductions from classical CSPs;
also, we believe that
the \emph{roles} of the various introduced constraints
in these reductions is highly transparent.
We hypothesize that progress in this area was hindered by
the lack of systematic tools
for computing and reasoning about such global gadgetry.
We provide evidence for the well-behavedness of our framework, by showing that under a certain assumption, whether or not the framework is
applicable is decidable
(Section~\ref{sect:decidability}).

Our framework has implications beyond those described for the complexity of the decision problems
$\scsp(\relb)$.
We show (Section~\ref{sect:condensations})
that the framework can also be used to derive NP-hardness
of cases of the problem of deciding if
a given relational structure $\rela$ has a \emph{condensation}
to a second relational structure $\relb$,
where a condensation is a surjective satisfying assignment that also maps the tuples of each relation of $\rela$ surjectively onto
the corresponding relation of $\relb$.
We also discuss 
implications for the \emph{sparsifiability}
of the problems $\scsp(\relb)$ (Section~\ref{sect:sparsifiability}); this setting concerns polynomial-time reductions from a hard problem $\scsp(\relb)$ to a second problem, such that the output instance obeys a size guarantee.

Overall, we wish to emphasize our introduction of systematic machinery,
the unifying nature of this machinery in terms of yielding
a common understanding of previously disparate hardness proofs,
and the clarifying and transparent nature of our approach.
Our work also ties into a large body of work on 
CSPs and other problems that utilizes
algebraic closure properties of solution spaces 
as a way of obtaining insight into the complexity of problems.
In particular, we make use of such properties in the
form of so-called \emph{partial polymorphisms},
and
our work contributes to a growing literature which
uses partial polymorphisms as a tool to understand
the complexity of various computational problems~\cite{JonssonLNZ17-partial-clones-sat-problems,LagerkvistR17-inverse-sat,ChenJP18-sparsifiability,CouceiroHL19-fine-grained-partial-polymorphisms,LagerkvistW20-sparsification}.

\section{Preliminaries}

When $f: A \to B$ and $g: B \to C$ are mappings, we use
 $g \circ f$  to denote their composition.
 We adhere to the convention that for any set $S$,
there is a single element in $S^0$; 
this element is referred to as the \emph{empty tuple},
and is
denoted by $\epsilon$.

\subsection{Structures, formulas, and problems}

A \emph{signature} is a set of \emph{relation symbols};
each relation symbol $R$ has an associated arity (a natural number),
denoted by $\ar(R)$.
A \emph{structure} $\relb$ over signature $\sigma$
consists of a \emph{universe} $B$ which is a set,
and an interpretation $R^{\relb} \subseteq B^{\ar(R)}$
for each relation symbol $R \in \sigma$.
We tend to use the letters $\rela, \relb, \ldots$
to denote structures, and the letters $A, B, \ldots$
to denote their respective universes.
In this article, we 
assume that signatures under discussion are finite, and 
also assume that all structures under discussion are finite;
a structure is finite when its universe is finite.

By an \emph{atom} (over a signature $\sigma$), 
we refer to a formula of the form
$R(v_1, \ldots, v_k)$ where $R$ is a relation symbol (in $\sigma$), 
$k = \ar(R)$,
and the $v_i$ are variables.
An \emph{$\wedge$-formula} (over a signature $\sigma$) is a conjunction 
$\beta_1 \wedge \cdots \wedge \beta_m$
where each conjunct $\beta_i$ is an atom (over $\sigma$)
or a variable equality $v = v'$.  Here, we permit an empty conjunction.
With respect to a structure $\relb$,
an $\wedge$-formula $\phi$ over the signature of $\relb$
is \emph{satisfied} by an mapping $f$ to $B$
defined on the variables of $\phi$ when:
\begin{itemize}

\item for each atom $R(v_1, \ldots, v_k)$ in $\phi$,
it holds that $(f(v_1),\ldots,f(v_k)) \in R^{\relb}$,
and
\item for each variable equality $v = v'$, it holds that $f(v) = f(v')$.
  \end{itemize}
When this holds, we refer to $f$ as a \emph{satisfying assignment}
of $\phi$ (over $\relb$).  Note that the empty conjunction
is considered to be satisfied by any such mapping $f$.

We now define the computational problems to be studied.
For each structure $\relb$, define
$\csp(\relb)$ to be the problem of deciding, given 
a $\wedge$-formula $\phi$ (over the signature of $\relb$),
whether or not there exists a 
satisfying assignment, that is, a map $f$ to $B$,
defined on the variables of $\phi$,
that satisfies $\phi$ over $\relb$.
For each structure $\relb$, define
$\scsp(\relb)$ to be the problem of deciding, given 
a pair $(U, \phi)$ where $U$ is a set of variables and
$\phi$ is a $\wedge$-formula (over the signature of $\relb$)
with variables from $U$, whether or not there exists a 
surjective satisfying assignment on $U$, that is, a
surjective map
$f: U \to B$ 
that satisfies $\phi$ over $\relb$.\footnote{
We remark that, given an instance of a problem $\scsp(\relb)$,
variable equalities may be efficiently eliminated in a way 
that preserves the existence of a surjective satisfying
assignment~\cite[Proposition 2.1]{Chen14-hardness-surjective-csp}.
Likewise, given an instance of a problem $\csp(\relb)$,
variable equalities may be efficiently eliminated
in a way that preserves the existence of a satisfying
assignment.  Thus, in these problems,
whether or not one allows variable equalities
in instances is a matter of presentation, 
for the complexity issues at hand.}

Unless mentioned otherwise, when discussing NP-hardness
and NP-completeness, we refer to these notions
as defined with respect to polynomial-time many-one reductions.

\subsection{Definability and algebra}

Let $\relb$ be a relational structure.
Let $S$ be a finite set, and let $F$ be a set of mappings,
each of which is from $S$ to $B$.
We say that $F$
is \emph{$\wedge$-definable} over $\relb$ if there exists
a $\wedge$-formula $\phi$, whose variables are drawn from $S$,
such that $F$ is the set of satisfying assignments 
$f: S \to B$ of $\phi$, with respect to $\relb$.
Let $S$ be a finite set; when $T$ is a set of mappings, 
each of which is from $S$ to $B$, we use $\langle T \rangle_{\relb}$
to denote the smallest $\wedge$-definable 
set of mappings from $S$ to $B$ (over $\relb$) that contains $T$.
(Such a smallest set exists: over a structure $\relb$,
the set of all mappings from a finite set $S$ to $is$ B $\wedge$-definable via
the empty conjunction; and,
one clearly has $\wedge$-definability
of the intersection of two $\wedge$-definable sets
of mappings all sharing the same type.)

Let $T$ be a set of mappings from a finite set $I$ to 
a finite set $B$.
A \emph{partial polymorphism} of $T$ is a partial
mapping $p: B^J \to B$ such that, 
for any selection $s: J \to T$ of maps,
letting $c_i \in B^J$ denote the mapping taking each $j \in J$
to $(s(j))(i)$:
\begin{center}
if the mapping from $I$ to $B$ sending each $i \in I$ to
the value $p( c_i )$ is defined at each point $i \in I$,\\
then it is contained in $T$.
\end{center}
Conventionally, one speaks of a partial polymorphism 
of a relation $Q \subseteq B^k$; we here give a more
general formulation, as it will be convenient for us
to deal here with arbitrary index sets $I$.
When $Q \subseteq B^k$ is a relation,
we apply the just-given definition by viewing 
each element of $Q$ as a set of mappings
from the set $\{ 1, \ldots, k \}$ to $B$;
likewise, when $p:B^k \to B$ is a partial mapping
with $k \geq 0$ a natural number, we apply this definition
by viewing each element of $B^k$ as a mapping
from $\{ 1, \ldots, k \}$ to $B$.
A partial mapping $p: B^J \to B$ is a
partial polymorphism of a relational structure
if it is a partial polymorphism of each of the relations
of the structure.

The following is a known result; it connects $\wedge$-definability to
closure under partial polymorphisms.

\begin{theorem}
\label{thm:galois}
\cite{Romov81-algebras-partial-functions}
Let $\relb$ be a structure.
Let $T$ be a set of non-empty mappings 
from a finite set $I$ to $B$;
for each $i \in I$, let $\pi_i: T \to B$
denote the map defined by $\pi_i(t) = t(i)$.
The set $\langle T \rangle_{\relb}$
is equal to 
the set $T'$ of maps from $I$ to $B$
having a definition of the form $i \mapsto p(\pi_i)$,
where $p: B^T \to B$ is a partial polymorphism (of $\relb$)
with domain $\{ \pi_i ~|~ i \in I \}$.
\end{theorem}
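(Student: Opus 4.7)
The plan is to establish both inclusions $T' \subseteq \langle T \rangle_{\relb}$ and $\langle T \rangle_{\relb} \subseteq T'$; both rest on the natural pairing between atoms on variable set $I$ and tuples drawn from $\{\pi_i : i \in I\} \subseteq B^T$.

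For $T' \subseteq \langle T \rangle_{\relb}$, I would fix an arbitrary $\wedge$-formula $\phi$ whose satisfying-assignment set (over $\relb$) contains $T$, take an arbitrary $f \in T'$ with witnessing partial polymorphism $p$, and check that $f$ satisfies each conjunct of $\phi$. The key observation is that for each atom $R(v_1, \ldots, v_k)$ of $\phi$, the map $t \mapsto (t(v_1), \ldots, t(v_k))$ constitutes a selection $T \to R^{\relb}$ whose induced coordinate maps $c_j \in B^T$ are precisely $\pi_{v_1}, \ldots, \pi_{v_k}$; applying the partial-polymorphism condition to $p$ then directly yields $(f(v_1), \ldots, f(v_k)) = (p(\pi_{v_1}), \ldots, p(\pi_{v_k})) \in R^{\relb}$. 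Variable equalities follow from the fact that $p$ is a function on its domain: whenever $\pi_v = \pi_{v'}$ one has $p(\pi_v) = p(\pi_{v'})$.

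For the reverse inclusion, the strategy is to exhibit a canonical $\wedge$-formula $\phi^*$ on variables $I$ that is ``maximal'' with respect to $T$: include every atom $R(i_1, \ldots, i_k)$ satisfied by every element of $T$, and every equality $i = i'$ with $\pi_i = \pi_{i'}$. By construction $\phi^*$ is satisfied by every $t \in T$, so its satisfying-assignment set $F^*$ is a $\wedge$-definable superset of $T$ and hence contains $\langle T \rangle_{\relb}$. It then suffices to prove $F^* \subseteq T'$: given $f \in F^*$, define $p$ on $\{\pi_i : i \in I\}$ by $p(\pi_i) = f(i)$ (well-defined thanks to the equality clauses of $\phi^*$) and verify that $p$ is a partial polymorphism of $\relb$.

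The verification that $p$ is a partial polymorphism is the step I expect to be the main obstacle, though its content is bookkeeping rather than conceptual. A selection $s: T \to R^{\relb}$ whose induced coordinate maps $c_1, \ldots, c_k \in B^T$ all lie in the domain $\{\pi_i : i \in I\}$ picks out indices $i_1^*, \ldots, i_k^* \in I$ for which $s(t) = (t(i_1^*), \ldots, t(i_k^*))$, and this tuple lies in $R^{\relb}$ for every $t \in T$; thus the atom $R(i_1^*, \ldots, i_k^*)$ belongs to $\phi^*$, and since $f$ satisfies $\phi^*$ we obtain $(p(c_1), \ldots, p(c_k)) = (f(i_1^*), \ldots, f(i_k^*)) \in R^{\relb}$. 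Care is needed in aligning the quantification over index sets in the partial-polymorphism definition with the variables of $\phi^*$, but once this translation is set up the claim drops out.
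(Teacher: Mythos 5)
Your proof is correct and takes essentially the same approach as the paper's: the forward inclusion by checking that a witnessing partial polymorphism preserves every atom and equality of any $\wedge$-formula whose satisfying set contains $T$, and the reverse inclusion by building the canonical maximal formula (the paper's $\theta$) and verifying that the map $\pi_i \mapsto f(i)$ extracted from any of its satisfying assignments is a partial polymorphism. The only cosmetic difference is that you handle coincident projections $\pi_i = \pi_{i'}$ via explicit equality conjuncts, whereas the paper first reduces to the case where the $\pi_i$ are pairwise distinct.
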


We provide a proof of this theorem for the sake of completeness.

\begin{proof}
It is straightforward to verify that each
partial polymorphism of $\relb$ is also a partial
polymorphism of $\langle T \rangle_{\relb}$.
It thus holds that $T' \subseteq \langle T \rangle_{\relb}$.

In order to establish that
$\langle T \rangle_{\relb} \subseteq T'$, 
it suffices to show that $T'$ is $\wedge$-definable over $\relb$.
Since equalities are permitted in $\wedge$-formulas,
it suffices to prove the result in the case that the
maps $\pi_i$ are pairwise distinct.
Let $\theta$ be the $\wedge$-formula
where,
for each relation symbol $R$,
the atom $R(i_1, \ldots, i_k)$ is included as a conjunct
if and only if 
each $t \in T$ satisfies
$(t(i_1), \ldots, t(i_k)) \in R^{\relb}$.
By the definition of partial polymorphism,
we have that each map in $T'$ satisfies $\theta$.
On the other hand, when $u: I \to B$ is a map that satisfies
the formula $\theta$, 
consider the partial mapping
$q: B^T \to B$ sending $\pi_i$ to $u(i)$;
this is well-defined since the $\pi_i$ are pairwise distinct,
and is a partial polymorphism, by the construction of $\theta$.
\end{proof}

A \emph{polymorphism} is a partial polymorphism
that is a total mapping. 
A total mapping $p: B^J \to B$ is \emph{essentially unary}
if there exists $j \in J$ and a unary operation
$u: B \to B$ such that, for each mapping $h: J \to B$,
it holds that $p(h) = u(h(j))$.
An \emph{automorphism} of a structure $\relb$
is a bijection $\sigma: B \to B$ such that,
for each relation $R^{\relb}$ of $\relb$ and
for each tuple $(b_1, \ldots, b_k)$ whose arity $k$
is that of $R^{\relb}$,
it holds that $(b_1, \ldots, b_k) \in R^{\relb}$
if and only if $(\sigma(b_1), \ldots, \sigma(b_k)) \in R^{\relb}$.
It is well-known and straightforward to verify that,
for each finite structure $\relb$,
a bijection $\sigma: B \to B$ is an automorphism
if and only if it is a polymorphism.

Let $p: B^J \to B$ be a partial mapping.
For each $b \in B$, let $b^J$ denote the mapping
from $J$ to $B$
that sends each element $j \in J$ to $b$.
The \emph{diagonal} of $p$, denoted by $\hat{p}$,
is the partial unary mapping from $B$ to $B$
such that $\hat{p}(b) = p(b^J)$ for each $b \in B$.
With respect to a structure $\relb$,
we say that a partial mapping $p: B^J \to B$ is \emph{automorphism-like} 
when there exists $j \in J$ and an automorphism $\gamma$ (of $\relb$)
such that, for each mapping $h: J \to B$,
if $p(h)$ is defined, then it is equal to $\gamma(h(j))$.

\section{Framework}
\label{sect:framework}

Throughout this section, let $\relb$ be a 
finite relational structure, and let $B$ be its universe.
Let $I$ be a finite set; let $T$ be a set of mappings 
from $I$ to $B$.
Let us say that $T$ is \emph{surjectively closed} 
over $\relb$ if each surjective mapping in 
$\langle T \rangle_{\relb}$ is contained in 
$\{ \gamma \circ t ~|~ \textup{$\gamma$ is an automorphism of $\relb$, $t \in T$} \}$.
The following is essentially a consequence of Theorem~\ref{thm:galois}.

\begin{proposition}
\label{prop:surjectively-closed}
Let $I$, $T$ be as described.
For each $i \in I$, let $\pi_i: T \to B$ denote
the mapping defined by $\pi_i(t) = t(i)$.
The following are equivalent:

\begin{itemize}

\item The set $T$ is surjectively closed over $\relb$.
\item Each surjective partial polymorphism $p: B^T \to B$
(of $\relb$)
with domain $\{ \pi_i ~|~ i \in I \}$ is automorphism-like.

\end{itemize}
\end{proposition}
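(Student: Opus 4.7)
The plan is to use Theorem~\ref{thm:galois} to set up a correspondence between elements of $\langle T \rangle_{\relb}$ and partial polymorphisms $p: B^T \to B$ of $\relb$ whose domain is exactly $\{ \pi_i \mid i \in I \}$, and then to verify that (i) surjectivity of the map $i \mapsto p(\pi_i)$ onto $B$ matches surjectivity of $p$ on its domain, and (ii) the map $i \mapsto p(\pi_i)$ having the form $\gamma \circ t$ for some automorphism $\gamma$ and some $t \in T$ matches $p$ being automorphism-like. With these two translations in hand, the two bullets become literal restatements of one another.

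First, I would spell out the correspondence: by Theorem~\ref{thm:galois}, a map $f: I \to B$ belongs to $\langle T \rangle_{\relb}$ iff there is a partial polymorphism $p$ of $\relb$ with domain $\{ \pi_i \mid i \in I \}$ such that $f(i) = p(\pi_i)$ for all $i \in I$. Since the image of such an $f$ is precisely $\{ p(\pi_i) \mid i \in I \}$, which is the image of $p$ on its domain, $f$ is surjective onto $B$ exactly when $p$ is surjective in the sense used in the second bullet. For the ``automorphism-like'' translation, note that the index set $J$ in the definition of automorphism-like is here $T$; a witness $j \in J$ is thus some $t \in T$, and for the corresponding automorphism $\gamma$, the defining equation gives $p(\pi_i) = \gamma(\pi_i(t)) = \gamma(t(i))$ for each $i \in I$, so that $f = \gamma \circ t$. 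Conversely, if $f = \gamma \circ t$ for an automorphism $\gamma$ and $t \in T$, then taking $j := t$ witnesses that $p$ is automorphism-like on its domain.

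Combining these observations gives the claimed equivalence: $T$ fails to be surjectively closed iff some surjective $f \in \langle T \rangle_{\relb}$ is not of the form $\gamma \circ t$, which by the correspondence above is iff some surjective partial polymorphism $p$ with the prescribed domain fails to be automorphism-like. The routine portions are the application of Theorem~\ref{thm:galois} and the image computation; the one step requiring a bit of care is matching the ``if $p(h)$ is defined'' clause in the definition of automorphism-like against the values of $p$ on its domain $\{ \pi_i \mid i \in I \}$, and verifying that a single $(j, \gamma)$ witness for the map $f$ does produce a single $(j, \gamma)$ witness making $p$ automorphism-like (and vice versa). This is really just unwinding definitions, so I do not anticipate a substantive obstacle.
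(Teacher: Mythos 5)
Your proposal is correct and follows essentially the same route as the paper's own proof: both use Theorem~\ref{thm:galois} to pass between surjective elements of $\langle T \rangle_{\relb}$ and surjective partial polymorphisms with domain $\{\pi_i \mid i \in I\}$, and then observe that ``of the form $\gamma \circ t$'' translates exactly to ``automorphism-like'' under this correspondence. The paper simply writes out the two implications separately rather than packaging them as a single biconditional dictionary, but the content is identical.
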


We consider a partial polymorphism $p: B^T \to B$ to be 
\emph{surjective} when, for each $b \in B$, there exists
$h \in B^T$ such that $p(h)$ is defined and is equal to $b$.

\begin{proof}
Suppose that $T$ is surjectively closed over $\relb$.
Let $p: B^T \to B$ be a surjective partial polymorphism
with the described domain.
By Theorem~\ref{thm:galois}, the map $t'$ defined by 
$i \mapsto p(\pi_i)$ is an element of $\langle T \rangle_{\relb}$;
note also that this map is surjective.
By the definition of surjectively closed,
it holds that $t'$ has the form $\gamma(t)$,
where $\gamma$ is an automorphism of $\relb$, and $t \in T$.
It follows that $p$ is automorphism-like.

Suppose that each surjective partial polymorphism
from $B^T$ to $B$ with the described domain is 
automorphism-like.  
By Theorem~\ref{thm:galois}, each surjective mapping $t'$ in
$\langle T \rangle_{\relb}$
is defined by $i \mapsto p(\pi_i)$ where $p$
is a partial polymorphism; note that $p$ is surjective.
By hypothesis, $p$ is automorphism-like;
it follows that $t'$ has the form 
$\gamma(t)$ where $\gamma$ is an automorphism of $\relb$,
and $t \in T$.  We conclude that $T$ is surjectively closed.
\end{proof}

In what follows, we generally use 
$D$ to denote a finite set,
$V$ to denote a finite set of variables, and
$G_{V,D}$ to denote the set of all mappings from $V$ to $D$; we will sometimes refer to elements of $G_{V,D}$ as \emph{assignments}.

Define an \emph{encoding} (for $\relb$) to be
a finite set $F$ of mappings, 
each of which is from a finite power $D^k$ of a finite set $D$ to $B$; we refer to $k$ as the
\emph{arity} of such a mapping.
Formally, an \emph{encoding} for $\relb$ is a finite set $F$ such that
there exists a finite set $D$ where, for each $f \in F$, there exists
$k \geq 0$ such that $f$ is a mapping from $D^k$ to $B$.
In what follows, we will give a sufficient condition for an encoding
to yield a reduction from a classical CSP over a structure
with universe $D$ to the problem $\scsp(\relb)$.
Assume $F$ to be an encoding;
define a \emph{$(V,F)$-application} to be a pair $(\tup{v}, f)$
consisting of a tuple of variables from $V$ and a mapping
$f \in F$ such that
the length of $\tup{v}$ is equal to the arity of $f$.
Let $A_{V,F}$ denote the set of all $(V,F)$-applications.

\begin{example}
Let $D = \{ 0, 1 \}$, $B = \{ 0, 1, 2 \}$,
and let $F$ be the encoding that 
contains the $3$ mappings from $D^0$ to $B$
as well as the $6$ injective mappings from $D^1$ to $B$.
We have $|F| = 9$.  
(This encoding will be used in Section~\ref{subsect:no-rainbow}.)

Let $V$ be a set of size $4$.  As $|V| = 4$ and $|D| = 2$,
we have $|G_{V,D}| = 2^4 = 16$.  
The arity $0$ mappings in $F$ give rise to $3$ $(V,F)$-applications
and the arity $1$ mappings in $F$ give rise to $6 |V| = 24$
$(V,F)$-applications.  Thus we have $|A_{V,F}| = 27$.
\end{example}

Let $V$ be a finite set of variables; let $F$ be an encoding.
For each $g \in G_{V,D}$, 
when $\alpha = ((v_1, \ldots, v_k), f)$ is an application
in $A_{V,F}$, 
define $\alpha[g]$ to be the value
$f(g(v_1), \ldots, g(v_k))$;
define $t[g]$ to be the map 
from $A_{V,F}$ to $B$ where
each application $\alpha$ is mapped to
$\alpha[g]$. Define $T_{V,F} = \{ t[g] ~|~ g \in G_{V,D} \}$.

\begin{proposition}
\label{prop:compute-closure}
Let $F$ be an encoding.
There 
exists a polynomial-time algorithm that, given a finite set $V$,
computes an $\wedge$-formula (over the signature of $\relb$)
defining $\langle T_{V,F} \rangle_{\relb}$.
\end{proposition}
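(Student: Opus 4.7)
The plan is to construct the canonical $\wedge$-formula $\phi_V$ whose variable set is $A_{V,F}$ itself (viewing each application as a variable), and whose conjuncts are exactly the atoms and variable equalities that are satisfied by every map in $T_V$. Concretely, for each relation symbol $R$ of arity $r$ in the signature of $\relb$, I include the atom $R(\alpha_1,\ldots,\alpha_r)$ in $\phi_V$ if and only if $(t[g](\alpha_1), \ldots, t[g](\alpha_r)) \in R^{\relb}$ holds for every $g \in G_{V,D}$; similarly I include the equality $\alpha = \alpha'$ whenever $t[g](\alpha) = t[g](\alpha')$ for every such $g$.

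For correctness, one inclusion is immediate: every $t[g]$ satisfies $\phi_V$ by construction, so the satisfaction set of $\phi_V$ is a $\wedge$-definable superset of $T_V$ and hence contains $\langle T_V \rangle_{\relb}$. For the converse, fix any $\wedge$-formula $\psi$ (with variables among $A_{V,F}$) defining $\langle T_V \rangle_{\relb}$; since $T_V \subseteq \langle T_V \rangle_{\relb}$, every conjunct of $\psi$ is satisfied by all of $T_V$ and hence already appears as a conjunct of $\phi_V$, so any satisfying assignment of $\phi_V$ also satisfies $\psi$, yielding the reverse inclusion.

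For the running time, both $F$ and $\relb$ are fixed, so the maximum arity $k$ of a mapping in $F$ and the maximum arity $r$ of a relation of $\relb$ are constants, giving $|A_{V,F}| = O(|V|^k)$ and at most $O(|V|^{rk})$ candidate atoms and equalities. The only nontrivial point is the membership test, since the naive quantification over $g \in G_{V,D}$ is exponential in $|V|$. The key observation is a locality property: for an atom $R(\alpha_1,\ldots,\alpha_r)$ with $\alpha_i = (\tup{v}_i, f_i)$, the values $t[g](\alpha_i)$ depend only on the restriction of $g$ to the (at most $rk$) variables appearing across the $\tup{v}_i$, so only $|D|^{rk}$ distinct restrictions ever matter and each atom or equality can be tested in constant time.

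The main obstacle is precisely this efficiency analysis; correctness follows a routine pattern in which any $\wedge$-formula satisfied by $T_V$ is implied by the canonical formula. The trick is to recognize that applications have bounded support, turning what looks like an exponential universal quantification over $G_{V,D}$ into a constant-size enumeration of partial assignments on the relevant variables.
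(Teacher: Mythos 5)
Your proposal is correct and takes essentially the same approach as the paper: both construct the canonical $\wedge$-formula over variable set $A_{V,F}$ consisting of all atoms and equalities that hold on every map in $T_V$, and both rely on the same bounded-support observation (only constantly many variables of $V$ occur in any fixed tuple of applications) to make each inclusion test constant-time. Your explicit minimality argument for why this canonical formula defines exactly $\langle T_V \rangle_{\relb}$ is a routine but welcome detail that the paper's proof leaves implicit.
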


Let us clarify how polynomial time is measured here:
we assume that the input to the algorithm is simply a finite set $V$,
and that such a set is encoded by a list where each element
appears explicitly; so, in particular, the length of a set's encoding
is larger than $|V|$.

\begin{proof}
The algorithm performs the following.  
For each relation  $R^{\relb}$ of $\relb$,
let $k$ be its arity.
For each tuple $(\alpha_1, \ldots, \alpha_k) \in A_{V,F}^k$,
the projection of $T$ onto $(\alpha_1, \ldots, \alpha_k)$
can be computed, by considering all possible assignments
on the variables in $V$ that appear in the $\alpha_i$;
note that the number of such variables is bounded by a constant,
since both the signature of $\relb$ and $F$ are assumed to be finite.
If this projection is a subset of $R^{\relb}$, then
include $R(\alpha_1, \ldots, \alpha_k)$ in the formula;
otherwise, do not.  Perform the same process for the
equality relation on $B$, with $k = 2$; whenever the projection
is a subset of this relation, include $\alpha_1 = \alpha_2$
in the formula.
\end{proof}

\begin{definition}
\label{def:stability}
Let $F$ be an encoding.
Define a relational structure $\relb$ to be
\emph{$F$-stable} if, for each non-empty finite set $V$,
it holds that
 each map in $T_{V,F}$ is surjective, and
 $T_{V,F}$ is surjectively closed (over $\relb$).
\end{definition}

Note that, 
relative to an encoding $F$,
 only the size of $V$ matters in the definition of $T_{V,F}$
in Definition~\ref{def:stability},
in the sense that when $V$ and $V'$ are of the same size,
$T_{V,F}$ and $T_{V',F}$ are equal up to relabelling of indices.

\begin{definition}
\label{def:induced-relation}
Let $F$ be an encoding.
An \emph{$F$-induced relation} of $\relb$ is 
a relation $Q' \subseteq D^s$ (with $s \geq 1$)
such that, letting $(u_1, \ldots, u_s)$ be 
a tuple of pairwise distinct variables,
there exists:
\begin{itemize}
\item 
a relation $Q \subseteq B^r$ that is
either a relation of $\relb$ or the equality relation on $B$,
and

\item 
a tuple $(\alpha_1, \ldots, \alpha_r) \in A^r_{U,F}$
\end{itemize}
such that
$Q' = \{ (g(u_1),\ldots,g(u_s)) ~|~ g \in G_{U,D},
((t[g])(\alpha_1), \ldots, (t[g])(\alpha_r)) \in Q \}$.
We refer to $Q$ as the relation that induces $Q'$,
and to the pair $(Q,(\alpha_1, \ldots, \alpha_r))$
as the \emph{definition} of $Q'$.
\end{definition}

\begin{definition}
Let $F$ be an encoding.
An \emph{$F$-induced template of $\relb$} is
a relational structure $\reld$ with universe $D$
and whose relations are all $F$-induced relations of $\relb$.
\end{definition}

\begin{theorem}
\label{thm:stable-gives-reduction}
Let $F$ be an encoding.
Suppose that $\relb$ is $F$-stable, and that
$\reld$ is an $F$-induced template of $\relb$.
Then, the problem $\csp(\reld)$ polynomial-time many-one
reduces to $\scsp(\relb)$.
\end{theorem}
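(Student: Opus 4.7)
The plan is to describe the reduction explicitly and then verify it runs in polynomial time and is correct. Given an input $\phi$ of $\csp(\reld)$ with variable set $V$, I would construct the $\scsp(\relb)$-instance $(A_{V,F}, \psi)$, whose variables are the applications in $A_{V,F}$. The guiding intuition is that each $g \in G_{V,D}$ corresponds to a candidate surjective assignment $t[g] : A_{V,F} \to B$, and we want the surjective satisfying assignments of $\psi$ to be exactly the maps of the form $\gamma \circ t[g]$, where $g$ satisfies $\phi$ and $\gamma$ is an automorphism of $\relb$.

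The formula $\psi$ is a conjunction of two parts. The first part $\psi_V$ is a $\wedge$-formula defining $\langle T_V \rangle_{\relb}$, furnished in polynomial time by Proposition~\ref{prop:compute-closure}; this part ensures that every satisfying assignment lies in $\langle T_V \rangle_{\relb}$. The second part encodes the constraints of $\phi$: for each atom $R(v_{i_1}, \ldots, v_{i_s})$ of $\phi$, I would fix, once and for all, a definition $(Q, (\alpha_1, \ldots, \alpha_r))$ of $R$ as an $F$-induced relation and include the conjunct $Q(\alpha_1', \ldots, \alpha_r')$, where each $\alpha_j'$ is the $(V,F)$-application obtained from $\alpha_j$ by the substitution $u_\ell \mapsto v_{i_\ell}$ (when $Q$ is the equality relation on $B$, a variable equality is included instead). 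Since $\reld$ and the chosen definitions are fixed constants, polynomial-time computability is clear.

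The forward direction of correctness is routine: given a satisfying assignment $g : V \to D$ of $\phi$, the map $t[g] \in T_V$ is surjective by $F$-stability and lies in $\langle T_V \rangle_{\relb}$, hence satisfies $\psi_V$. For each added constraint, I would establish the key substitution identity $t[g](\alpha_j') = t[g'](\alpha_j)$, where $g' : U \to D$ is defined by $g'(u_\ell) = g(v_{i_\ell})$; combining this with the defining property of $R$ as an $F$-induced relation and the fact that $(g(v_{i_1}), \ldots, g(v_{i_s})) \in R^{\reld}$ then yields $(t[g](\alpha_1'), \ldots, t[g](\alpha_r')) \in Q$, as required.

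For the backward direction, any surjective satisfying assignment $h : A_{V,F} \to B$ of $\psi$ lies in $\langle T_V \rangle_{\relb}$ thanks to the $\psi_V$-part, so $F$-stability gives $h = \gamma \circ t[g]$ for some automorphism $\gamma$ of $\relb$ and some $g \in G_{V,D}$. Each added conjunct $Q(\alpha_1', \ldots, \alpha_r')$ satisfied by $h$ is therefore also witnessed by $t[g]$, since $\gamma$ preserves the relations of $\relb$ and equality; reversing the substitution identity and the definition of $R$ then delivers $(g(v_{i_1}), \ldots, g(v_{i_s})) \in R^{\reld}$, so $g$ satisfies $\phi$. The principal bookkeeping obstacle is carefully organizing the substitution between the generic variables $u_\ell$ used to define each relation of $\reld$ and the input variables $v_{i_\ell}$ of $\phi$; once the substitution identity is cleanly in place, correctness follows by a direct chase through the definitions and the two halves of $F$-stability.
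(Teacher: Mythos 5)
Your proposal is correct and matches the paper's proof essentially step for step: the same two-part formula (a $\wedge$-definition of $\langle T_V\rangle_{\relb}$ from Proposition~\ref{prop:compute-closure} plus substituted atoms for the induced relations), the same substitution identity relating $t[g]$ on the substituted applications to the definition of an $F$-induced relation, and the same use of the two halves of $F$-stability in the forward and backward directions. No further comment is needed.
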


\begin{proof}
Let $\phi$ be an instance of $\csp(\reld)$ 
with variables $V$.  
We may assume (up to polynomial-time computation) 
that $\phi$ does not include any variable equalities.
We create an instance $(A_{V,F},\psi)$ of
$\scsp(\relb)$;
this is done by computing two $\wedge$-formulas $\psi_0$ and $\psi_1$,
and setting $\psi = \psi_0 \wedge \psi_1$.

Compute $\psi_0$ to be an $\wedge$-formula
defining $\langle T_{V,F} \rangle_{\relb}$,
where $T_{V,F} = \{ t[g] ~|~ g \in G_{V,D} \}$;
such a formula is polynomial-time computable
by Proposition~\ref{prop:compute-closure}.

Compute $\psi_1$ as follows.
For each atom $R'(v_1, \ldots, v_s)$ of $\phi$
where $R'^{\reld}$ is induced by a relation $R^{\relb}$,
let $c: \{ u_1, \ldots, u_s \} \to V$ be the mapping
sending each $u_i$ to $v_i$, 
and include the atom $R(c(\alpha_1), \ldots, c(\alpha_r))$
in $\psi_1$; here,
$(\alpha_1, \ldots, \alpha_r)$ is the tuple from
Definition~\ref{def:induced-relation},
and
$c$ acts on an application $\alpha$ by being applied
individually
to each variable in the variable tuple of $\alpha$,
that is, when $\alpha = ((u'_1, \ldots, u'_k),f)$,
we have $c(\alpha) = ((c(u'_1), \ldots, c(u'_k)), f)$.
For each atom $R'(v_1, \ldots, v_s)$ of $\phi$ where
$R'^{\reld}$ is induced by the equality relation on $B$, 
let $c$ and $(\alpha_1, \alpha_2)$ be as above, and 
include the atom $c(\alpha_1) = c(\alpha_2)$ in $\psi_1$.
We make the observation that, 
from Definition~\ref{def:induced-relation},
a mapping $g \in G_{V,D}$ satisfies
an atom $R'(v_1, \ldots, v_s)$ of $\phi$
if and only if $t[g]$ satisfies the corresponding atom
or equality
in $\psi_1$.

We argue that $\phi$ is a \emph{yes} instance of
$\csp(\reld)$ if and only if $(A_{V,F},\psi)$ is a 
\emph{yes} instance
of $\scsp(\relb)$.
Suppose that $g \in G_{V,D}$ is a satisfying assignment
of $\phi$.  
The assignment $t[g]$ satisfies $\psi_0$ since
$t[g] \in T_{V,F}$.
Since the assignment $g$ satisfies each atom of $\phi$,
by the observation,
the assignment $t[g]$ satisfies each atom
and equality of $\psi_1$, and so
$t[g]$ is a satisfying assignment of $\psi_1$.
It also holds that $t[g]$ is surjective
by the definition of $F$-stable.
Thus, we have that $t[g]$ is a surjective satisfying
assignment of $\psi$.
Next, suppose that there exists 
a surjective satisfying assignment $t'$ of
$\psi$.  Since $t'$ satisfies $\psi_0$, it holds that
$t' \in \langle T_{V,F} \rangle_{\relb}$.
Since $T_{V,F}$ is surjectively closed (over $\relb$) by
$F$-stability, there exists $g \in G_{V,D}$
such that $t' = \gamma(t[g])$ for an automorphism
$\gamma$ of $\relb$.  Since $t'$ is a satisfying assignment
of $\psi$, so is $t[g]$; it then follows from
the observation that $g$ is a satisfying assignment of $\phi$.
\end{proof}

\newcommand{\sigmat}{\widetilde{\sigma}}

\subsection*{Inner symmetry}

Each set of the form $\langle T_{V,F} \rangle_{\relb}$
is closed under the automorphisms of $\relb$,
since each automorphism is a partial polymorphism
(recall Theorem~\ref{thm:galois}).
We here present another form of symmetry that
such a set $\langle T_{V,F} \rangle_{\relb}$
may possess, which we dub \emph{inner symmetry}.
Relative to a structure $\relb$ and an encoding $F$,
we define an \emph{inner symmetry} 
to be a pair $\sigma = (\rho,\tau)$
where $\rho: D \to D$ is a bijection, and
$\tau: B \to B$ is an automorphism of $\relb$ such that
$F = \{ \sigmat \circ f ~|~ f \in F \}$;
here, when $g: D^k \to B$ is a mapping,
$\sigmat(g): D^k \to B$ is defined as the composition
$\tau \circ g \circ (\rho,\ldots,\rho)$,
where $(\rho,\ldots,\rho)$ denotes the mapping
from $D^k$ to $D^k$ that applies $\rho$ to each entry
of a tuple in $D^k$.
When $\sigma$ is an inner symmetry,
we naturally extend the definition of $\sigmat$
so that it is defined on each application:
when $\alpha=(\tup{v}, f)$ is an application,
define $\sigmat(\alpha) = (\tup{v}, \sigmat(f))$.

The following theorem describes the symmetry on 
$\langle T_{V,F} \rangle_{\relb}$ induced by an inner symmetry.

\begin{theorem}
Let $\relb$ be a structure, let $F$ be an encoding,
and let $\sigma=(\rho,\tau)$ be an inner symmetry thereof.
Let $V$ be a non-empty finite set.
For any map $u: A_{V,F} \to B$, define $u': A_{V,F} \to B$
by $u'(\alpha) = u(\sigmat(\alpha))$;
it holds that $u \in \langle T_{V,F} \rangle_{\relb}$
if and only if $u' \in \langle T_{V,F} \rangle_{\relb}$.
\end{theorem}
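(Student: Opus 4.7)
The plan is to decompose the operation $u \mapsto u'$ into two pieces whose individual effects on $\langle T_V \rangle_{\relb}$ are easy to control, and then close the argument via the universal property of $\langle T_V \rangle_{\relb}$. Write $R_\sigmat$ for precomposition with $\sigmat$, so that $u' = R_\sigmat(u)$, and $L_\tau$ for left composition with $\tau$; both are bijections on the set of all maps $A_{V,F} \to B$, since $\sigmat$ restricts to a permutation of $F$ (by the inner symmetry hypothesis together with the injectivity of $\tau$ and $\rho$), hence $\alpha \mapsto \sigmat(\alpha)$ is a permutation of $A_{V,F}$. Set $\Phi' := L_{\tau^{-1}} \circ R_\sigmat$, so that $R_\sigmat = L_\tau \circ \Phi'$.

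The key computation is to check that $\Phi'$ stabilises $T_V$. For $g \in G_{V,D}$ and an application $\alpha = ((v_1, \ldots, v_k), f)$, unwinding the definitions yields $t[g](\sigmat(\alpha)) = \sigmat(f)(g(v_1), \ldots, g(v_k)) = \tau(f(\rho(g(v_1)), \ldots, \rho(g(v_k)))) = \tau(t[\rho \circ g](\alpha))$, so $(t[g])' = \tau \circ t[\rho \circ g]$ and hence $\Phi'(t[g]) = t[\rho \circ g]$. Since $\rho$ is a bijection of $D$, the map $g \mapsto \rho \circ g$ is a bijection of $G_{V,D}$, so $\Phi'$ restricts to a bijection of $T_V$ onto itself, with inverse sending $t[g]$ to $t[\rho^{-1} \circ g]$.

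Next I would verify that both $R_\sigmat$ and $L_\tau$ send $\wedge$-definable subsets of $(A_{V,F} \to B)$ to $\wedge$-definable subsets. For $R_\sigmat$: given a $\wedge$-formula $\phi$ in variables $\{ x_\alpha : \alpha \in A_{V,F} \}$ that defines a set $S$, the formula obtained by renaming each $x_\alpha$ to $x_{\sigmat^{-1}(\alpha)}$ throughout is still a $\wedge$-formula, and one checks directly that it defines $R_\sigmat(S)$. For $L_\tau$: since $\tau$ is an automorphism of $\relb$, postcomposition with $\tau$ (or $\tau^{-1}$) preserves the satisfaction of every relational atom and every variable-equality, so in fact $L_\tau(S) = S$ for every $\wedge$-definable $S$.

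The conclusion then follows by a minimality argument: the set $\Phi'(\langle T_V \rangle_{\relb})$ is $\wedge$-definable and contains $\Phi'(T_V) = T_V$, so $\langle T_V \rangle_{\relb} \subseteq \Phi'(\langle T_V \rangle_{\relb})$ by the minimality of $\langle T_V \rangle_{\relb}$, and the same reasoning applied to $(\Phi')^{-1}$ yields the reverse inclusion, giving $\Phi'(\langle T_V \rangle_{\relb}) = \langle T_V \rangle_{\relb}$. Since $L_\tau$ also preserves $\langle T_V \rangle_{\relb}$ and $R_\sigmat = L_\tau \circ \Phi'$, we obtain $R_\sigmat(\langle T_V \rangle_{\relb}) = \langle T_V \rangle_{\relb}$, which is exactly the stated biconditional. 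The main obstacle is the initial unwinding showing $\Phi'(t[g]) = t[\rho \circ g]$; once that identification of $\Phi'$ as a ``pullback along $\rho$'' is in hand, the rest is a formal application of the universal property of $\langle T_V \rangle_{\relb}$.
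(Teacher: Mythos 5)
Your proposal is correct and follows essentially the same route as the paper's proof: the core computation $(t[g])'=\tau\circ t[\rho\circ g]$, the fact that $\sigmat$ permutes $A_{V,F}$ (so precomposition is a variable renaming preserving $\wedge$-definability), and the fact that $\tau$ is an automorphism are exactly the three ingredients the paper combines, just packaged there as the equality $\langle T_V\rangle_{\relb}=\langle T'_V\rangle_{\relb}$ rather than as your factorisation $R_{\sigmat}=L_{\tau}\circ\Phi'$. Your version is, if anything, slightly more explicit about why the closure operator commutes with the renaming, a step the paper leaves implicit.
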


\begin{proof}
For each $g \in G_{V,D}$,
define $t'[g]: A_{V,F} \to B$ to map each $\alpha \in A_{V,F}$
to $(\sigmat(\alpha))[g]$.
Define $T'_{V,F}$ as $\{ t'[g] ~|~ g \in G_{V,D} \}$.
By definition, 
$T'_{V,F} = \{ \tau(t[\rho(g)]) ~|~ g \in G_{V,D} \}$;
since $\rho$ is a bijection, we have that
$T'_{V,F} = \{ \tau(t[h]) ~|~ h \in G_{V,D} \}$.
Since $\tau$ is an automorphism of $\relb$,
we obtain 
$\langle T_{V,F} \rangle_{\relb} = \langle T'_{V,F} \rangle_{\relb}$.

Since $\sigma$ is an inner symmetry, we have
$F = \{ \sigmat \circ f ~|~ f \in F \}$,
from which it follows that the action of $\sigmat$
on applications in $A_{V,F}$ is a bijection on $A_{V,F}$.
We have $t'[g](\alpha) = t[g](\sigmat(\alpha))$.
For any map $u: A_{V,F} \to B$, define $u': A_{V,F} \to B$
by $u'(\alpha) = u(\sigmat(\alpha))$.
For all $u: A_{V,F} \to B$, we have 
$u' \in T'_{V,F} \Leftrightarrow u \in T_{V,F}$,
implying that 
$u' \in \langle T'_{V,F} \rangle_{\relb} 
\Leftrightarrow u \in \langle T_{V,F} \rangle_{\relb}$.
Since 
$\langle T_{V,F} \rangle_{\relb} = \langle T'_{V,F} \rangle_{\relb}$,
the theorem follows.
\end{proof}

\newcommand{\lf}{[}
\newcommand{\rf}{]}

\section{Hardness results}
\label{sect:hardness-results}

Throughout this section, we employ the following conventions.
When $B$ is a set and $b \in B$, we use the notation
$\tup{b}$ to denote the arity $0$ function
from $D^0$ to $B$ sending the empty tuple to $b$.
Let $V$ be a set, and let $F$ be an encoding.
When $\tup{b} \in F$, we 
overload the notation $\tup{b}$
and also use it
to denote the unique $(V,F)$-application in which it appears.
Relative to a structure $\relb$ (understood from the context),
when $\alpha_1, \ldots, \alpha_k \in A_{V,F}$ are applications
and $R$ is a relation symbol,
we write $R(\alpha_1,\ldots,\alpha_k)$ when,
for each $g \in G_{V,D}$, it holds that
$(\alpha_1[g], \ldots, \alpha_k[g] ) \in R^{\relb}$;
when $R$ is a symmetric binary relation, we also say that
$\alpha_1$ and $\alpha_2$ are adjacent.
When $\alpha$ is an application in $A_{V,F}$,
and $p: B^{T_{V,F}} \to B$ is a partial mapping,
we simplify notation by using $p(\alpha)$ to denote the value 
$p(\pi_{\alpha})$
(recall the definition of $\pi_{\alpha}$
from Proposition~\ref{prop:surjectively-closed}).

\subsection{Disconnected cut: the reflexive $4$-cycle}
\label{subsect:disconn-cut}

Let us use $\relc$ to denote the reflexive $4$-cycle,
that is, the structure with universe $C = \{ 0, 1, 2, 3 \}$
and single binary relation
$E^{\relc} = C^2 \setminus \{ (0,2),(2,0),(1,3),(3,1) \}$.
The problem  $\scsp(\relc)$ was shown to be NP-complete by
\cite{MartinPaulusma15-discut}; we here give a proof using our framework.
When discussing this structure, we will say that two values
$c, c' \in C$ are adjacent when $(c,c') \in E^{\relc}$.
Set $D = \{ 0, 1, 3 \}$. 
We use the notation $\lf a b c \rf$ to denote
the function $f: D \to C$ with $(f(0),f(1),f(3)) = (a,b,c)$,
so, for example $\lf 0 1 3 \rf$ denotes the identity mapping
from $D$ to $C$.
Define $F$ as the encoding
$$\{ \tup{0},\tup{1},\tup{2},\tup{3},
\lf 0 1 3 \rf,
\lf 0 1 0 \rf,
\lf 3 2 3 \rf,
\lf 3 1 3 \rf,
\lf 1 1 2 \rf,
\lf 0 0 3 \rf,
\lf 1 1 3 \rf
 \}.$$
We will prove the following.

\begin{theorem}
\label{thm:4-cycle}
The reflexive $4$-cycle $\relc$ is $F$-stable.  
\end{theorem}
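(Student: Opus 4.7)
The plan is as follows. $F$-stability (Definition~\ref{def:stability}) requires that, for each non-empty finite $V$, every map in $T_V$ is surjective and $T_V$ is surjectively closed over $\relc$. Surjectivity is immediate: since $\tup{0}, \tup{1}, \tup{2}, \tup{3}$ all lie in $F$, they contribute $(V,F)$-applications at which $t[g]$ returns $0,1,2,3$ respectively, so each $t[g]$ already hits every element of $C$ at these arity-$0$ applications. For surjective closure, I would invoke Proposition~\ref{prop:surjectively-closed} and fix an arbitrary surjective partial polymorphism $p : C^{T_V} \to C$ of $\relc$ with domain $\{\pi_\alpha \mid \alpha \in A_{V,F}\}$; the task is to produce an automorphism $\gamma$ of $\relc$ and an assignment $g^* \in G_{V,D}$ such that $p(\alpha) = \gamma(\alpha[g^*])$ for all $\alpha \in A_{V,F}$. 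The only reasonable candidates are $\gamma(c) := p(\tup{c})$ and, once $\gamma$ is known to be invertible, $g^*(v) := \gamma^{-1}(p((v, \lf 0 1 3 \rf)))$.

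The proof then splits into two parts. First, show that $\gamma$ is an automorphism. Since adjacency among constants mirrors $E^{\relc}$ exactly and $p$ preserves adjacency, $\gamma$ is always a homomorphism $\relc \to \relc$; the real work is ruling out non-bijective $\gamma$. For each unary $f \in F$, the application $(v,f)$ is adjacent precisely to those constants $\tup{c}$ for which $c$ is a common neighbor of $\image(f)$, and through $p$ this forces $p((v,f))$ into the intersection, over those $c$'s, of the neighborhoods (in $\relc$) of $\gamma(c)$. For instance, $\lf 0 1 3 \rf$ has image $\{0,1,3\}$ whose only common neighbor in $\relc$ is $0$, pinning $p((v, \lf 0 1 3 \rf))$ to the three-element neighborhood of $\gamma(0)$. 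By enumerating the proper (non-surjective) endomorphisms of $\relc$ and checking that each such $\gamma$ would confine the full collection of $p$-values to a proper subset of $C$, one contradicts surjectivity of $p$ and concludes that $\gamma$ is surjective, and hence an automorphism of the finite structure $\relc$.

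Second, replace $p$ with $\gamma^{-1} \circ p$ (still a surjective partial polymorphism, since automorphisms are polymorphisms) to reduce to the case where $\gamma$ is the identity. Now set $g^*(v) := p((v, \lf 0 1 3 \rf))$. Using that $(v, \lf 0 1 3 \rf)$ is adjacent to $\tup{0}$ but not to $\tup{2}$, together with the same-variable adjacencies between $(v, \lf 0 1 3 \rf)$ and the other six unary applications on $v$, one verifies that $g^*(v) \in \{0,1,3\} = D$. Then, for each remaining unary $f \in F$ and each candidate $d \in \{0,1,3\}$ for $g^*(v)$, a finite case-check confirms that the same-variable adjacency between $(v, f)$ and $(v, \lf 0 1 3 \rf)$, combined with the adjacencies of $(v, f)$ to the constants $\tup{0}, \tup{1}, \tup{2}, \tup{3}$, uniquely determines $p((v, f)) = f(d)$. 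Putting the pieces together yields $p(\alpha) = \alpha[g^*]$ for every $\alpha$, so $p$ is automorphism-like.

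The step I expect to be hardest is ruling out non-bijective $\gamma$ in the first part. The reflexive $4$-cycle has a rich endomorphism monoid beyond its dihedral automorphism group — constants, retractions onto edges, and collapses onto three-vertex induced subgraphs — and the argument must show that each such proper collapse is incompatible with the full pattern of $p$-values dictated by the seven unary functions in $F$. The encoding appears to have been engineered precisely so that, for every non-automorphism $\gamma$, the constrained neighborhoods produced by the functions in $F$ fail to cover $C$, contradicting surjectivity of $p$. The second part, by contrast, should boil down to a mechanical verification against the finite table of values defining $F$.
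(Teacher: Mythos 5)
Your overall architecture matches the paper's: surjectivity of each $t[g]$ via the constants, and then, via Proposition~\ref{prop:surjectively-closed}, a split according to whether the diagonal $\gamma(c) = p(\tup{c})$ is an automorphism (in which case one verifies automorphism-likeness by a finite check, exactly as the paper does) or a proper endomorphism (in which case one must contradict the surjectivity of $p$). The gap is in the mechanism you propose for the second, hardest alternative. You claim that for each non-surjective endomorphism $\gamma$, the constraints ``$p(v,f)$ lies in the intersection, over the constants $\tup{c}$ adjacent to $(v,f)$, of the neighborhoods of $\gamma(c)$'' already confine the full collection of $p$-values to a proper subset of $C$. This is false for several of the diagonals. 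Take the diagonal $(1,0,0,0)$, i.e., $p(\tup{0})=1$ and $p(\tup{1})=p(\tup{2})=p(\tup{3})=0$, which is a proper endomorphism of $\relc$. The application $(v,\lf 0 1 3 \rf)$ is adjacent only to $\tup{0}$, so your constraint merely forces $p(v,\lf 0 1 3 \rf)$ into the neighborhood of $1$, namely $\{0,1,2\}$, which permits the value $2$; and $(v,\lf 3 2 3 \rf)$ is adjacent only to $\tup{2}$ and $\tup{3}$, forcing its value into the neighborhood of $0$, namely $\{0,1,3\}$, which permits the value $3$. Hence the union of permitted values over all applications is all of $C$, and no contradiction with surjectivity arises at this level. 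The same failure occurs for the diagonals $(0,0,1,1)$ and $(0,1,2,1)$.

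What is actually needed --- and what the paper's case analysis does --- is a propagation argument exploiting the adjacencies among the unary applications themselves, not just their adjacencies to constants. For diagonal $(1,0,0,0)$, for instance: if $p(u,\lf 0 1 3 \rf)=2$, then every $(u,f)$ with $f \neq \lf 0 1 3 \rf$ is adjacent to $(u,\lf 0 1 3 \rf)$, and combining this with adjacency to $\tup{0}$ forces a chain of values $1$ at the variable $u$; one then observes that the value $3$ could only be attained at a \emph{second} variable $u'$ on an application not adjacent to $\tup{0}$, and the cross-variable adjacency between $(u',f')$ and $(u,f')$ yields the contradiction. The cases $(0,0,1,1)$ and $(0,1,2,1)$ likewise require locating two applications realizing $2$ and $3$ respectively and connecting them by forced values. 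So your plan correctly identifies the bottleneck, but the tool you bring to it does not close it: carried out as described, the proof stalls precisely at the step you flag as hardest.
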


We begin by observing the following.

\begin{proposition}
Define $\rho: D \to D$ as the bijection that swaps $1$ and $3$;
define $\tau: B \to B$ as the bijection that swaps $1$ and $3$.
The pair $\sigma=(\rho,\tau)$ is an inner symmetry of $F$
and $\relc$.
\end{proposition}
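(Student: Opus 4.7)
The plan is to verify the three conditions in the definition of an inner symmetry: (i) $\rho$ is a bijection on $D$; (ii) $\tau$ is an automorphism of $\relc$; and (iii) $F = \{\sigmat(f) \mid f \in F\}$. Condition (i) is immediate since $\rho$ is a transposition on $D$. For (ii), I would track what $\tau$ does to the non-edges $C^2 \setminus E^{\relc} = \{(0,2),(2,0),(1,3),(3,1)\}$: since $\tau$ fixes both $0$ and $2$ and swaps $1$ with $3$, it fixes $(0,2)$ and $(2,0)$ and swaps $(1,3)$ with $(3,1)$, so the non-edge set is preserved setwise. Because $\tau$ is a bijection, this establishes that $\tau$ is an automorphism of $\relc$.

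The substantive step is (iii). I would first derive explicit formulas for $\sigmat$ on the two arities appearing in $F$. For arity-$0$ maps, $\sigmat(\tup{b}) = \tup{\tau(b)}$, so $\sigmat$ permutes $\{\tup{0},\tup{1},\tup{2},\tup{3}\}$ by fixing $\tup{0}, \tup{2}$ and swapping $\tup{1}$ with $\tup{3}$. For an arity-$1$ map $\lf a b c \rf$ (with the convention $0 \mapsto a$, $1 \mapsto b$, $3 \mapsto c$), unwinding $\sigmat(g)(d) = \tau(g(\rho(d)))$ gives $\sigmat(\lf a b c \rf) = \lf \tau(a)\, \tau(c)\, \tau(b) \rf$: the $b$- and $c$-slots are exchanged because $\rho$ swaps $1$ and $3$, and then $\tau$ is applied coordinatewise. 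A quick tabulation on the seven arity-$1$ elements of $F$ then shows that $\lf 0 1 3 \rf$ is a fixed point, and the remaining maps pair up as $\lf 0 1 0 \rf \leftrightarrow \lf 0 0 3 \rf$, $\lf 3 2 3 \rf \leftrightarrow \lf 1 1 2 \rf$, and $\lf 3 1 3 \rf \leftrightarrow \lf 1 1 3 \rf$. Hence $\sigmat$ restricts to a bijection of $F$ with itself, yielding (iii).

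The only point demanding care --- hardly an obstacle --- is correctly unwinding the composition $\tau \circ g \circ (\rho,\ldots,\rho)$ for arity-$1$ maps: since $\rho$ and $\tau$ happen to be the very same transposition $(1\;3)$, there is temptation to collapse them, but one must remember that $\rho$ acts on the domain index while $\tau$ acts on the codomain value, so the two swaps are not redundant. Once the formula $\sigmat(\lf a b c \rf) = \lf \tau(a)\, \tau(c)\, \tau(b) \rf$ is in hand, the rest of the verification is a mechanical table lookup over the listed elements of $F$.
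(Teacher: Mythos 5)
Your proof is correct and follows essentially the same route as the paper: the paper's proof is exactly the tabulation of the action of $\sigmat$ on $F$, exhibiting the fixed points and the transposed pairs $\tup{1}\leftrightarrow\tup{3}$, $\lf 0 1 0 \rf\leftrightarrow\lf 0 0 3 \rf$, $\lf 3 2 3 \rf\leftrightarrow\lf 1 1 2 \rf$, $\lf 3 1 3 \rf\leftrightarrow\lf 1 1 3 \rf$, which matches your computation via the formula $\sigmat(\lf a b c \rf) = \lf \tau(a)\,\tau(c)\,\tau(b) \rf$. You are slightly more thorough in also verifying explicitly that $\tau$ is an automorphism of $\relc$, which the paper leaves implicit.
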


\begin{proof}
Consider the action of $\sigmat$ on $F$.
This action $\sigmat$ transposes $\tup{1}$ and $\tup{3}$;
$\lf 0 1 0 \rf$ and $\lf 0 0 3 \rf$;
$\lf 3 2 3 \rf$ and $\lf 1 1 2 \rf$; and,
$\lf 3 1 3 \rf$ and $\lf 1 1 3 \rf$.
It fixes each other element of $F$.
\end{proof}

\begin{proof} (Theorem~\ref{thm:4-cycle})
Let $V$ be a non-empty finite set; we need to show that
$T_{V,F}$ is surjectively closed.
We make use of Proposition~\ref{prop:surjectively-closed}.
Consider a partial polymorphism 
$p: C^{T_{V,F}} \to C$ whose domain is 
$\pi_\alpha$ over all $(V,F)$-applications $\alpha \in A_{V,F}$.

We first consider the situation where $p$ has a surjective diagonal;
we want to show that $p$ is automorphism-like.
Define $\beta: C \to C$ by $\beta(c) = p(\tup{c})$.
We have that $\beta$ is surjective and
a polymorphism of $\relc$, implying that it is an automorphism of $\relc$.
By considerations of symmetry (in particular, by replacing $p$
with the translation $\beta^{-1} \circ p$, and then
translating back post-argument), we may 
assume that 
$(p(\tup{0}),p(\tup{1}), p(\tup{2}),p(\tup{3})) = (0,1,2,3)$.
Under this assumption, it is straightforward to verify the
following fact:
for any application $(v,[abc])$, it holds that
$p(v,[abc]) \in \{ a, b, c \}$.
For example, in the case that the set
$\{ a, b, c \}$ has size $2$ and its two elements $a', b'$
are adjacent in $\relc$,
$(v,[abc])$ is adjacent to $\tup{a'}$ and $\tup{b'}$, 
implying that $p(v,[abc])$ is adjacent to both
$p(\tup{a'}) = a'$ and $p(\tup{b'}) = b'$, and is thus in $\{ a', b' \}$.
In the following reasoning, we implicitly use this fact, continually.

Fix $v$ to be a variable.  We show that 
there exists a map $g_v: \{ v \} \to D$ such that,
for each $f \in F$, $p(v,f) = f(g_v(v))$.  This suffices,
since then one can define a map $g \in G_{V,D}$ that extends all
of the maps $(g_v)$ to derive the map $t[g] \in T_{V,F}$ equal to
the map sending each application $\alpha$ to $p(\alpha) = p(\pi_\alpha)$.

We have that $(v,[010])$, $(v,[323])$, and $(v,[313])$
are pairwise adjacent.
When the value of $p(v,[010])$ is $0$,
we have, by the identified adjacencies,
$p(v,[323]) = 3$ and $p(v,[313]) = 3$, which by the adjacency
of $(v,[313])$ and $(v,[003])$, implies $p(v,[003])$ is $0$ or $3$.
When the value of $p(v,[010])$ is $1$, 
we have, by the identified adjacencies,
$p(v,[323]) = 2$ and $p(v,[313]) = 1$, which by the adjacency
of $(v,[313])$ and $(v,[003])$, implies $p(v,[003])$ is $0$.
We thus have $3$ cases, depending on the value of the tuple
$(p(v,[010]), p(v,[323]), p(v,[313]), p(v,[003]))$:
this tuple is equal to either
$(0,3,3,0)$, $(1,2,1,0)$, or $(0,3,3,3)$.
By using the pairwise adjacency of $(v,[112])$, $(v,[003])$, and
$(v,[113])$, we can confirm that in the $3$ cases, these 
applications are mapped by $p$ to $(1,0,1)$, $(1,0,1)$, and $(2,3,3)$,
respectively.
Then, by using the adjacency of $(v,[013])$ with each of
$(v,[010])$, $(v,[323])$, and $(v,[112])$, we can confirm that
$p(v,[013])$ is, in the $3$ cases, equal to $0$, $1$, and $3$, respectively.

We have analyzed the situation where $p$ has a surjective diagonal.
To establish the result, it suffices to
show that if $p$ has a non-surjective diagonal,
then it is not surjective; this is what we do in the rest of the proof.

By considerations of symmetry 
(namely, by the automorphisms and by the inner symmetries), 
it suffices to consider
the following values for 
the diagonal values $(p(\tup{0}),p(\tup{1}), p(\tup{2}),p(\tup{3}))$:
$(0,0,0,0)$, $(0,0,1,0)$, $(0,1,0,0)$, $(1,0,0,0)$,
$(0,0,1,1)$, $(0,1,0,1)$, $(0,1,2,1)$, $(0,1,0,3)$.
We consider each of these cases.

In each of the first $3$ cases, we argue as follows.  
Consider an application $(v,f)$ with 
$f: D \to B$ in $F$ and $v \in V$;
it is adjacent to $\tup{0}$,
adjacent to $\tup{3}$, or 
adjacent to both $\tup{1}$ and $\tup{2}$; thus,
for such an application, we have $p(v,f)$ is adjacent to $0$.
It follows that for no such application do we have $p(v,f) = 2$,
and so $p$ is not surjective.

Case: diagonal  $(1,0,0,0)$.
Observe that any application $(v,f)$ with 
$f: D \to C$ in $F \setminus \{ \lf 0 1 3 \rf \}$ and $v \in V$;
is adjacent to 
$\tup{1}$, $\tup{2}$, or $\tup{3}$,
and thus, for any such application, $p(v,f)$ is adjacent to $0$.
Thus if $2$ is in the image of $p$,
there exists a variable $u \in V$ such that
$p(u,\lf 0 1 3 \rf) = 2$.  But then, for each
$f: D \to C$ in $F \setminus \{ \lf 0 1 3 \rf \}$,
we have that $(u,\lf 0 1 3 \rf)$ and $(u,f)$ are adjacent.
$(u,\lf 0 1 0 \rf)$ and $(u,\lf 0 0 3 \rf)$
are adjacent to $\tup{0}$, which $p$ maps to $1$, 
and to $(u,\lf 0 1 3 \rf)$, which $p$ maps to $2$;
this, by the observation,
$p(u,\lf 0 1 0 \rf) = p(u, \lf 0 0 3 \rf) = 1$.
$(u,\lf 3 2 3 \rf)$ and $(u,\lf 1 1 2 \rf)$ are adjacent
to the just-mentioned applications, from which we obtain
$p(u,\lf 3 2 3 \rf) = p(u,\lf 1 1 2 \rf) = 1$.

In order for $p$ to be surjective,  there exists
a different variable $u' \in V$ and $f' \in F$ such that
$p(u',f') = 3$. 
It must be that $(u',f')$ is not adjacent to $\tup{0}$.
But then $f'$ is $\lf 3 2 3 \rf$ or $\lf 1 1 2 \rf$,
and this contradicts that $(u',f')$ is adjacent to
$(u,f)$.

Case: diagonal $(0,0,1,1)$.
There must be an application $(v,f)$ with 
$p(v,f) = 2$.  Since $(v,f)$ cannot be adjacent to 
$\tup{0}$ nor $\tup{1}$, it must be that
$f = \lf 3 2 3 \rf$.  There must also be an application
$(v', f')$ with $p(v',f') = 3$; this application
cannot be adjacent to $\tup{2}$ nor $\tup{3}$, 
and so $f'$ is $\lf 0 1 3 \rf$ or $\lf 0 1 0 \rf$.
$(v',\lf 3 2 3 \rf)$ is adjacent to $(v',f')$,  to
 $\tup{3}$, and to $(v,\lf 3 2 3 \rf)$, and so
$p(v',\lf 3 2 3 \rf)$ is $2$.  
$(v', \lf 3 1 3 \rf)$ is adjacent to $(v',\lf 3 2 3 \rf)$,
to $f'$, to $\tup{0}$, and to $\tup{2}$, and thus it cannot
be mapped to any value.

Case: diagonal $(0,1,0,1)$ or $(0,1,0,3)$.
For each variable $u \in F$ and each $f: D \to B$
in $F$, it holds that $(u,f)$ is adjacent to
either $\tup{0}$ or $\tup{2}$.
It follows that no such pair $(u,f)$ maps to $2$ under $p$.

Case: diagonal $(0,1,2,1)$.
If $p$ is surjective, there exists
$(v,f)$ such that $p(v,f) = 2$.  
$(v,f)$ cannot be adjacent to $\tup{0}$, implying
that $f = \lf 3 2 3 \rf$ or $\lf 1 1 2 \rf$.
When $f = \lf 3 2 3 \rf$, we infer that 
$p(v,\lf 0 1 0 \rf) = 1$, and then that
$p(v,\lf 0 1 3 \rf) = p(v,\lf 3 1 3 \rf) = p(v,\lf 1 1 3 \rf) = 1$.
Similarly, when $f = \lf 1 1 2 \rf$,
we infer that
$p(v,\lf 0 0 3 \rf) = 1$, and then that
$p(v,\lf 0 1 3 \rf) = p(v,\lf 3 1 3 \rf) = p(v,\lf 1 1 3 \rf) = 1$.
There exists $(v',f')$ such that $p(v',f') = 3$.
$(v',f')$ cannot be adjacent to $\tup{1}$ nor $\tup{3}$, 
so $f'$ is one of $\lf 0 1 3 \rf$, $\lf 3 1 3 \rf$, $\lf 1 1 3 \rf$.
This contradicts that $(v',f')$ and $(v,f')$ are adjacent.
\end{proof}

\begin{theorem}
The problem $\scsp(\relc)$ is NP-complete.
\end{theorem}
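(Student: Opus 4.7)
The plan is to handle membership directly and to derive NP-hardness through the framework built above.

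Membership in NP is immediate: one may guess a mapping $f : V \to C$ and verify in polynomial time both that it is surjective and that every atom of the input formula holds in $\relc$.

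For NP-hardness, I combine Theorem~\ref{thm:4-cycle} with Theorem~\ref{thm:stable-gives-reduction}. The former asserts that $\relc$ is $F$-stable, where $F$ is the encoding fixed in Section~\ref{subsect:disconn-cut}; the latter then guarantees, for any $F$-induced template $\reld$ with universe $D = \{0,1,3\}$, a polynomial-time many-one reduction from $\csp(\reld)$ to $\scsp(\relc)$. So the task reduces to exhibiting an $F$-induced template $\reld$ for which $\csp(\reld)$ is NP-hard.

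To construct such a $\reld$, I would examine small tuples of $(U,F)$-applications $\alpha_1, \ldots, \alpha_r$ and, using Definition~\ref{def:induced-relation}, read off the induced relation $Q' \subseteq D^s$ obtained by intersecting $\{(g(u_1), \ldots, g(u_s)) : g \in G_{U,D}\}$ with the preimage under $t[\cdot]$ of either the edge relation $E^{\relc}$ or the equality relation on $C$. The encoding $F$ is rich — its non-constant mappings together realize every two-element subset of $C$ other than the non-edge $\{0,2\}$ — so there is ample flexibility to selectively forbid specified tuples on $D$. A natural target is an induced template whose relations force a disequality-like behaviour on $D = \{0,1,3\}$, so that $\csp(\reld)$ subsumes $3$-coloring and is therefore NP-hard.

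The main obstacle is precisely this last step: locating, by a focused case analysis, a concrete family of $F$-induced relations that collectively carve out an NP-hard CSP on the three-element set $D$. Once these relations are identified, the remainder of the argument is mechanical: Theorem~\ref{thm:stable-gives-reduction} converts them into the desired polynomial-time reduction to $\scsp(\relc)$, which together with the NP-membership observation yields NP-completeness.
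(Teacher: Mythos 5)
Your setup is right: NP-membership is immediate, and Theorem~\ref{thm:4-cycle} together with Theorem~\ref{thm:stable-gives-reduction} reduces everything to exhibiting an $F$-induced template $\reld$ with an NP-hard CSP. But that last step is the entire content of the proof, and you have left it as an acknowledged ``main obstacle'' rather than carrying it out. The paper does it concretely: it takes the four binary relations $S_1^{\reld'},\ldots,S_4^{\reld'}$ on $\{0,1,3\}$ from Section~2 of Martin and Paulusma's paper, shows that each is an \emph{intersection} of binary $F$-induced relations by writing down explicit definitions (pairs consisting of $E^{\relc}$ and a tuple of applications such as $(\tup{2},(u_2,\lf 0 1 3\rf))$ or $((u_1,\lf 0 1 3\rf),(u_2,\lf 3 2 3\rf))$), and then invokes Martin--Paulusma's argument that $\csp(\reld')$ is NP-complete. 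Without some such explicit construction, the hardness claim is unsupported.

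Your proposed target is also questionable on its own terms. You suggest aiming for ``disequality-like behaviour'' on $D=\{0,1,3\}$ so that $\csp(\reld)$ subsumes $3$-coloring, but $F$-induced relations of $\relc$ are obtained by pulling back $E^{\relc}$ (or equality on $C$) through pairs of applications, and each such pullback forbids a whole pattern of pairs determined by which value-combinations land on non-edges of $\relc$; there is no reason to expect that the disequality relation on $D$ (forbidding exactly the three diagonal pairs and nothing else) is realizable this way, and you give no candidate definition. For instance, any single definition using $E^{\relc}$ and two unary applications on distinct variables that forbids $(0,0)$ necessarily forbids several off-diagonal pairs as well. The paper sidesteps this by targeting relations already known to yield a hard CSP and realizable as intersections of induced relations, rather than trying to synthesize disequality. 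To close the gap you would need either to reproduce such an explicit family of induced relations and a hardness proof for the resulting template, or to verify that your disequality target is actually induced by $F$ -- neither of which your proposal does.
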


\begin{proof}
Define $\reld'$ (following~\cite[Section 2]{MartinPaulusma15-discut})
to be the structure with universe $\{ 0, 1, 3 \}$
and with relations
$$S_1^{\reld'} = \{ (0,3),(1,1),(3,1),(3,3) \},
S_2^{\reld'} = \{ (1,0),(1,1),(3,1),(3,3) \},$$
$$S_3^{\reld'} = \{ (1,3),(3,1),(3,3) \},
S_4^{\reld'} = \{ (1,1),(1,3),(3,1) \}.$$
Each of these relations is the intersection of
binary $F$-induced relations:
for
$S_1^{\reld'}$, use the definitions
$(E^{\relc}, (\tup{2}, (u_2,\lf 0 1 3 \rf)))$,
$(E^{\relc}, ((u_1, \lf 0 1 3 \rf), (u_2,\lf 3 2 3 \rf)))$;
for $S_2^{\reld'}$, 
the definitions
$(E^{\relc}, (\tup{2}, (u_1,\lf 0 1 3 \rf)))$,
$(E^{\relc}, ((u_1, \lf 1 1 2 \rf), (u_2,\lf 0 1 3 \rf)))$;
for $S_3^{\reld'}$, the definitions 
$(E^{\relc}, (\tup{2}, (u_1,\lf 0 1 3 \rf)))$,
$(E^{\relc}, (\tup{2}, (u_2,\lf 0 1 3 \rf)))$,
$(E^{\relc}, ((u_1, \lf 0 0 3 \rf), (u_2,\lf 3 2 3 \rf)))$;
and, for $S_4^{\reld'}$, the definitions
$(E^{\relc}, (\tup{2}, (u_1,\lf 0 1 3 \rf)))$,
$(E^{\relc}, (\tup{2}, (u_2,\lf 0 1 3 \rf)))$,
$(E^{\relc}, ((u_1, \lf 1 1 2 \rf), (u_2,\lf 0 1 0 \rf)))$.
Let $\reld$ be the $F$-induced template
of $\relb$ whose relations are all of the mentioned
$F$-induced relations.
Then, we have $\csp(\reld')$ reduces to $\csp(\reld)$,
and that $\csp(\reld)$ reduces to $\scsp(\relc)$ by 
Theorem~\ref{thm:stable-gives-reduction}.
The problem $\csp(\reld')$ is NP-complete, as argued in
\cite[Section 2]{MartinPaulusma15-discut}, 
and thus we conclude that $\scsp(\relc)$ is NP-complete.
\end{proof}

\subsection{No-rainbow $3$-coloring}
\label{subsect:no-rainbow}

Let $\reln$ be the structure with universe $N = \{ 0, 1, 2 \}$
and a single ternary relation
$$R^{\reln} = \{ (a,b,c) \in N^3 ~|~ \{ a, b, c \} \neq N \}.$$
The problem $\scsp(\reln)$ was first shown to be NP-complete
by Zhuk~\cite{Zhuk20-norainbow}; we give a proof 
which is akin to proofs given by 
Zhuk~\cite{Zhuk20-norainbow},
using our framework.

Define $D = \{ 0, 1 \}$, and define $F$ as
$\{ \tup{0}, \tup{1}, \tup{2} \} \cup U$
where $U$ is the set of all injective mappings from $D$ to $N$.
We use the notation $\lf a b \rf$
to denote the mapping $f: D \to N$ with $(f(0),f(1)) = (a,b)$.
Let $\iota_D: D \to D$ denote the identity mapping on $D$;
it is straightforwardly verified that, for each bijection
$\tau: N \to N$, the pair $(\iota_D, \tau)$ is 
an inner symmetry of $\reln$ and $F$.

\begin{theorem}
The structure $\reln$ is $F$-stable.	
\end{theorem}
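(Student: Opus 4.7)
The plan is to apply Proposition~\ref{prop:surjectively-closed}. Surjectivity of every $t[g] \in T_V$ is immediate, since the constants $\tup{0}, \tup{1}, \tup{2}$ lie in $F$ and $\tup{c}[g] = c$ for each $c \in N$; the real task is to show that every surjective partial polymorphism $p$ of $\reln$ with domain $\{ \pi_\alpha ~|~ \alpha \in A_{V,F} \}$ is automorphism-like.

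My approach is a case analysis on the diagonal $(p(\tup{0}), p(\tup{1}), p(\tup{2}))$, exploiting the rich symmetry of $(\reln, F)$. Every permutation of $N$ is an automorphism of $\reln$ (since $R^{\reln}$ is symmetric under all coordinate and value permutations), and every such $\tau$ also yields an inner symmetry $(\iota_D, \tau)$; postcomposing $p$ with an automorphism permutes the output values of the diagonal, while pulling back through an inner symmetry permutes its positions. Under these two operations it suffices to treat three representative diagonals: $(0,1,2)$, $(0,0,1)$, and $(0,0,0)$.

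In the case $(0,1,2)$, the plan is to show that $p$ coincides with $\alpha \mapsto \alpha[g]$ for some $g \in G_{V,D}$, making $p$ automorphism-like with $\gamma = \mathrm{id}$. For a fixed $v \in V$, a triple $R((v,f), \tup{a}, \tup{b})$ holds precisely when $a = b$ or $\mathrm{image}(f) \subseteq \{a,b\}$; applying polymorphism to the three non-trivial values of $\{a,b\}$ forces $p(v,f) \in \mathrm{image}(f)$ for every $f \in U$. Triples $R((v, f_1), (v, f_2), \tup{c})$ comparing pairs of injective maps at the same $v$ then force these six local choices to correspond to a single index $g(v) \in D$, i.e., $p(v, f) = f(g(v))$ for all $f \in U$. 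Assembling $g$ over $v \in V$ yields the required reconstruction.

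For the remaining cases the plan is to show that $p$ cannot be surjective. In diagonal $(0,0,1)$, polymorphism on $R((v,f), \tup{a}, \tup{b})$ with $\mathrm{image}(f) = \{a,b\} \in \{\{0,2\}, \{1,2\}\}$ immediately gives $p(v,f) \neq 2$; for the two remaining injective $f$ with image $\{0,1\}$, I would combine $R((v, \lf 01 \rf), (v, \lf 02 \rf), \tup{c})$ at $c = 1$ and $c = 2$, using the already-known range of $p(v, \lf 02 \rf)$, to rule out $p(v, \lf 01 \rf) = 2$ and symmetrically for $\lf 10 \rf$, so the value $2$ is missing from the image. For $(0,0,0)$, I would argue by contradiction: assuming $p$ surjective and picking $\alpha_i = (v_i, f_i)$ with $p(\alpha_1) = 1$ and $p(\alpha_2) = 2$, no triple $R(\alpha_1, \alpha_2, \tup{c})$ can hold (the image would be the rainbow $(1,2,0)$), so all three two-element subsets of $N$ must appear among the pairs $(\alpha_1[g], \alpha_2[g])$; this eliminates $v_1 = v_2$ and forces $f_1$ and $f_2$ to have distinct images. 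Using the inner symmetries, including those of the form $(\rho, \tau)$ with $\rho$ the nontrivial bijection of $D$, I would reduce to a canonical pair such as $f_1 = \lf 01 \rf, f_2 = \lf 02 \rf$, then apply polymorphism to triples $R(\alpha_1, \alpha_2, \beta)$ with $\beta$ chosen at $v_1$ or $v_2$ to propagate further values (for instance forcing $p(v_1, \lf 21 \rf) = 1$ and $p(v_2, \lf 12 \rf) = 2$), culminating in a triple such as $R((v_1, \lf 21 \rf), (v_2, \lf 12 \rf), \tup{1})$ whose image under $p$ is again the forbidden rainbow $(1,2,0)$, the desired contradiction. This final propagation step is the main obstacle: each sub-case requires exhibiting the right sequence of polymorphism constraints, though the residual symmetries should keep the number of sub-cases small.
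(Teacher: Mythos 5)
Your proposal is correct and follows essentially the same route as the paper: reduce via Proposition~\ref{prop:surjectively-closed}, observe that a surjective diagonal forces $p$ to be automorphism-like, and use the automorphisms together with the inner symmetries to cut the non-surjective diagonals down to two representatives (your $(0,0,1)$ lies in the same orbit as the paper's $(0,1,1)$), each refuted by exhibiting $R$-triples of applications whose image under $p$ would be a rainbow. The only piece you leave open, the enumeration of sub-cases in the $(0,0,0)$ propagation, is harmless: under the full inner-symmetry group $(\rho,\tau)$ there is exactly one orbit of pairs $(f_1,f_2)$ with distinct images besides that of $([01],[02])$, represented by $([01],[20])$, and it yields to the same two-step propagation ending in a rainbow triple.
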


\begin{proof}
Let $V$ be a non-empty finite set; we need to show that
$T_{V,F}$ is surjectively closed.
We use Proposition~\ref{thm:galois}.
It is straightforward to verify that each surjective partial polymorphism
of the described form having a surjective diagonal is 
automorphism-like.
Consider a partial polymorphism $p: N^{T_{V,F}} \to N$
whose domain is $\pi_\alpha$ over all
$(V,F)$-applications $\alpha \in A_{V,F}$.
We show that if $p$ has a non-surjective diagonal, 
then it is not surjective.
By considerations of symmetry
(namely, by the automorphisms and by the inner symmetries),
we need only consider
the following values 
for the diagonal $(\hat{p}(0),\hat{p}(1),\hat{p}(2))$:
$(0,0,0)$, $(0,1,1)$.

Diagonal $(0,0,0)$. Assume $p$ is surjective;
there exist applications 
$(v,\lf a b \rf)$,
$(v',\lf a' b' \rf)$ 
such that
$p(v,\lf a b \rf) = 1$,
$p(v',\lf a' b' \rf) = 2$.
In the case that $\{ a, b \} = \{ a', b' \}$,
we have $R( \tup{a}, (v,\lf a b \rf), (v', \lf a' b \rf))$
but that these applications are, under $p$,
equal to $(0,1,2)$, a contradiction.
Otherwise, there is one value in $\{a, b \} \cap \{ a', b \}$;
suppose this value is $b = b'$.
Let $c$ be the value in $N \setminus \{ a, b \}$,
and $c'$ be the value in $N \setminus \{ a', b' \}$.
We claim that $p(v,\lf a c \rf) = 1$: 
if it is $2$, we get a contradiction via
$R(\tup{b},(v,\lf a b \rf), (v,\lf a c \rf))$,
and if it is $0$, we get a contradiction via
$R((v, \lf a c \rf), (v, \lf a b \rf), (v', \lf a' b' \rf))$.
By analogous reasoning, we obtain that
$p(v', \lf a c' \rf) = 2$.
But since $\{ a, c \} = \{ a', c' \}$, we may reason
as in the previous case to obtain a contradiction.

Diagonal $(0,1,1)$.  Assume $p$ is surjective;
there exists an application $(v,\lf a b \rf)$
such that $p(v,\lf a b \rf) = 2$.
We have that $\{ a, b \} \neq \{ 0, 1 \}$,
for if not, we would have a contradiction via
$R(\tup{0},\tup{1},(v,\lf a b \rf))$.
Analogously, we have that $\{ a, b \} \neq \{ 0, 2 \}$.
Thus, we have $\{ a, b \} = \{ 1, 2 \}$.
Suppose that $\lf a b \rf$ is $\lf 1 2 \rf$
(the case where it is $\lf 2 1 \rf$ is analogous).
Consider the value of $p(v,\lf 0 2 \rf)$:
if it is $0$,
we have a contradiction via 
$R((v,\lf 0 2 \rf),\tup{1},(v,\lf 1 2 \rf))$,
if it is $1$,
we have a contradiction via
$R(\tup{0},(v,\lf 0 2 \rf),(v,\lf 1 2 \rf))$;
if it is $2$,
we have a contradiction via 
$R(\tup{0},\tup{2},(v,\lf 0 2 \rf))$.
\end{proof}

\begin{theorem}
\label{thm:no-rainbow-np-complete}
The problem $\scsp(\reln)$ is NP-complete.	
\end{theorem}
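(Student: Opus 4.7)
The plan is to derive NP-hardness via Theorem~\ref{thm:stable-gives-reduction}; NP-membership is immediate by guessing a surjective assignment and verifying. Since the previous theorem established that $\reln$ is $F$-stable, it suffices to exhibit an $F$-induced template $\reld$ with universe $D = \{0,1\}$ for which $\csp(\reld)$ is NP-hard.

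I would target the ternary \emph{not-all-equal} relation $D^3 \setminus \{(0,0,0),(1,1,1)\}$, whose associated single-relation Boolean CSP (positive NAE-$3$-SAT) is NP-hard by Schaefer's theorem. To realize it as an $F$-induced relation, I would take the defining pair $(R^{\reln}, (\alpha_1,\alpha_2,\alpha_3))$ with $\alpha_i = (u_i, f_i)$ and $(f_1, f_2, f_3) = (\lf 0 1 \rf, \lf 1 2 \rf, \lf 2 0 \rf)$. For $g \in G_{\{u_1,u_2,u_3\}, D}$, a short case check on the eight possible values of $g$ shows that the image triple $(\alpha_1[g], \alpha_2[g], \alpha_3[g])$ equals the rainbow $(0,1,2)$ when $g$ is identically $0$ and the rainbow $(1,2,0)$ when $g$ is identically $1$, whereas each of the six non-constant assignments yields a triple with a repeated coordinate, hence not a rainbow, hence in $R^{\reln}$. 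Thus the induced relation is exactly the NAE relation on $D^3$.

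Letting $\reld$ be the $F$-induced template of $\reln$ whose single ternary relation is this NAE relation, the problem $\csp(\reld)$ is NAE-$3$-SAT and hence NP-hard. Theorem~\ref{thm:stable-gives-reduction} then supplies a polynomial-time many-one reduction from $\csp(\reld)$ to $\scsp(\reln)$, completing the hardness proof. The only nontrivial step is choosing the three injective maps; once the cyclic triple $\lf 0 1 \rf, \lf 1 2 \rf, \lf 2 0 \rf$ is identified---it is the simplest way to arrange that both the column-$0$ and column-$1$ tuples span $N$---the remainder is a bounded table check, so no genuine obstacle arises.
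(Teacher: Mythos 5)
Your proposal is correct and matches the paper's proof essentially verbatim: the same defining triple $(R^{\reln},((u_1,\lf 0 1 \rf),(u_2,\lf 1 2 \rf),(u_3,\lf 2 0 \rf)))$ inducing the not-all-equal relation, followed by Schaefer's theorem and Theorem~\ref{thm:stable-gives-reduction}. Your explicit eight-case check of the induced relation is a detail the paper leaves implicit, and it verifies correctly.
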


\begin{proof}
The \emph{not-all-equal} relation
$\{ 0, 1 \}^3 \setminus \{ (0,0,0),(1,1,1) \}$
is an $F$-induced relation of $\reln$,
via the definition 
$(R^{\reln}, (u_1,\lf 0 1 \rf), (u_2,\lf 1 2 \rf), (u_3,\lf 2 0 \rf))$.
It is well-known that the problem $\csp(\cdot)$
on a structure having this relation is NP-complete
via Schaefer's theorem,
and thus we obtain the result by Theorem~\ref{thm:stable-gives-reduction}.
\end{proof}

\subsection{Diagonal-cautious clones}
\label{subsect:diagonal-cautious}

We show how the notion of stability can be used to derive the previous hardness result of the present author~\cite{Chen14-hardness-surjective-csp}.
Let $B = \{ b_1^*, \ldots, b^*_n \}$ be a set of size $n$.
When $\relb$ is a structure with universe $B$,
we use $\relb^*$ to denote the structure obtained from $\relb$
by adding, for each $b_i^* \in B$, a relation 
$\{ (b_i^*) \}$.
A set $C$ of operations on $B$ is \emph{diagonal-cautious} if there exists a map 
$G: B^n \to \wp(B)$ such that:

\begin{itemize}
\item for each operation $f \in C$,
it holds that $\image(f) \subseteq G(\hat{f}(b_1^*), \ldots, \hat{f}(b_n^*))$, and
\item for each tuple $(b_1, \ldots, b_n) \in B^n$,
if $\{ b_1, \ldots, b_n \} \neq B$, then
$G(b_1, \ldots, b_n) \neq B$.
\end{itemize}

The previous hardness result~\cite{Chen14-hardness-surjective-csp} that we rederive here is 
that when $\relb$ is a structure whose polymorphisms are \emph{diagonal-cautious},
it holds that $\csp(\relb^*)$ reduces to
$\scsp(\relb)$.

\begin{theorem}
Suppose that the set of polymorphisms 
of a relational structure $\relb$ is diagonal-cautious, 
and that the universe $B$ of $\relb$ has size $n \geq 2$.  There exists an encoding $F$, whose elements each have arity $\leq 1$,
such that:
\begin{itemize}
\item
the structure $\relb$ is $F$-stable,
and 
\item there is a surjective mapping $f_x: D \to B$ in $F$
where, for each relation $Q \subseteq B^k$ of $\relb$,
the relation 
$\bigcup_{(b_1,\ldots,b_k) \in Q} (f_x^{-1}(b_1) \times \cdots \times f_x^{-1}(b_k))$ is an $F$-induced relation.
\end{itemize}
It consequently holds that $\csp(\relb^*)$ reduces to $\scsp(\relb)$.
\end{theorem}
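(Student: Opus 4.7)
The plan is to take $D = B$, set $f_x = \iota_B : D \to B$ to be the identity map, and define $F = \{\tup{b_1^*}, \ldots, \tup{b_n^*}\} \cup \{f_x\}$; every element of $F$ has arity at most $1$, and $f_x$ is surjective. The second bullet then follows immediately: for each $k$-ary relation $Q$ of $\relb$, the definition $(Q, ((u_1, f_x), \ldots, (u_k, f_x)))$ induces the relation $\{(d_1, \ldots, d_k) \in D^k : (f_x(d_1), \ldots, f_x(d_k)) \in Q\}$, which, because $f_x$ is the identity, equals $\bigcup_{(b_1,\ldots,b_k) \in Q} f_x^{-1}(b_1) \times \cdots \times f_x^{-1}(b_k)$. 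For the reduction consequence, each singleton $\{b_i^*\}$ is also $F$-induced via the definition $(=_B, (\tup{b_i^*}, (u_1, f_x)))$, so letting $\reld$ be the $F$-induced template whose relations are the relations of $\relb$ together with all singletons $\{b_i^*\}$, the problem $\csp(\relb^*)$ is a syntactic subproblem of $\csp(\reld)$, and once $F$-stability is established Theorem~\ref{thm:stable-gives-reduction} yields $\csp(\relb^*) \leq \csp(\reld) \leq \scsp(\relb)$.

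To prove $F$-stability, I first observe that each $t[g] \in T_V$ is surjective since $t[g](\tup{b_i^*}) = b_i^*$ attains every element of $B$. For surjective closure, I invoke Proposition~\ref{prop:surjectively-closed}: consider a surjective partial polymorphism $p: B^{T_V} \to B$ of $\relb$ with domain $\{\pi_\alpha : \alpha \in A_{V,F}\}$ and show it is automorphism-like. Since the constant tuples $\pi_{\tup{b}}$ lie in the domain, the partial-polymorphism property forces the diagonal $\hat{p}: B \to B$ (defined by $\hat{p}(b) = p(\pi_{\tup{b}})$) to be an endomorphism of $\relb$. The central claim is that the surjectivity of $p$ forces $\hat{p}$ to be surjective; granting this, finiteness makes $\hat{p}$ an automorphism $\gamma$, and then setting $g_0(v) = \gamma^{-1}(p(\pi_{(v, f_x)}))$ yields $p(\pi_\alpha) = \gamma(\alpha[g_0])$ for every $\alpha \in A_{V,F}$, witnessing automorphism-likeness.

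The central claim will follow from the inclusion $\image(p) \subseteq G(\hat{p}(b_1^*), \ldots, \hat{p}(b_n^*))$, since surjectivity of $p$ then forces the right-hand side to equal $B$, and the contrapositive of the diagonal-cautious condition forces $\{\hat{p}(b_1^*), \ldots, \hat{p}(b_n^*)\} = B$. To establish the inclusion, it suffices to exhibit, for each $\alpha \in A_{V,F}$, a polymorphism $f_\alpha$ of $\relb$ with $\hat{f_\alpha} = \hat{p}$ and $p(\pi_\alpha) \in \image(f_\alpha)$, whence the diagonal-cautious condition applied to $f_\alpha$ gives $p(\pi_\alpha) \in G(\hat{p}(b_1^*), \ldots, \hat{p}(b_n^*))$. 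The case $\alpha = \tup{b}$ is immediate by taking $f_\alpha = \hat{p}$ itself, viewed as a unary polymorphism. The case $\alpha = (v, f_x)$ is the principal obstacle: it requires a polymorphism $f_\alpha$ of $\relb$ that simultaneously matches $\hat{p}$ on the diagonal and attains the prescribed value $p(\pi_{(v, f_x)})$, amounting to a local extension of the partial polymorphism $p$ to a total polymorphism of $\relb$. Carrying out this extension, by exploiting the specific structure of the small domain $\{\pi_\alpha\}$ together with the diagonal-cautious hypothesis, is where the technical core of the proof will lie.
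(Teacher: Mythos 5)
There is a genuine gap, and it sits exactly at the step you defer to "the technical core": exhibiting, for each application $(v,f_x)$, a \emph{total} polymorphism $f_\alpha$ of $\relb$ with $\hat{f_\alpha}=\hat{p}$ and $p(\pi_{(v,f_x)})\in\image(f_\alpha)$. With your encoding ($D=B$, $F=B_0\cup\{\iota_B\}$) this cannot be done in general, because the diagonal-cautious hypothesis is a statement about the \emph{total} polymorphism clone, whereas $p$ is only a partial polymorphism on a very sparse domain. Concretely, unwinding the definition, the only constraints tying $p(\pi_{(v,f_x)})$ to the diagonal values $\hat{p}(b)=p(\pi_{\tup{b}})$ come from tuples of applications whose projection onto $T_V$ lands entirely inside a relation $Q^{\relb}$; since $g$ ranges over all of $B^V$ and $f_x$ is the identity, such a mixed tuple is constrained only when $Q^{\relb}$ contains an entire cylinder over the constant coordinates. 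For most structures these constraints are vacuous or very weak, so the values of $p$ on the $(v,f_x)$-applications are essentially decoupled from $\hat{p}$: whenever $\relb$ admits a non-surjective endomorphism $e$ (which diagonal-cautious structures, e.g.\ ones with only essentially unary polymorphisms, may well have), one can take $\hat{p}=e$ and choose the values $p(\pi_{(v,f_x)})$ to cover $B$, producing a surjective element of $\langle T_V\rangle_{\relb}$ that is not of the form $\gamma\circ t[g]$. So $F$-stability fails for your encoding, and no local extension argument can rescue it.

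The paper's proof avoids this by choosing the encoding so that $\wedge$-definability itself forces the link between the diagonal and the other values. It invokes a gadget relation $P$ (Lemma 3.3 of \cite{Chen14-hardness-surjective-csp}), namely the smallest $\wedge$-definable relation containing a fixed tuple system, with coordinates $v_1,\ldots,v_n,x,y_1,\ldots,y_m$, satisfying: every tuple of $P$ has all entries in $G$ applied to its first $n$ entries; every $c\in B$ occurs in the $x$-coordinate of some tuple extending $(b_1^*,\ldots,b_n^*)$; and the first $n$ entries of any tuple of $P$ are realized by a unary polymorphism applied to $(b_1^*,\ldots,b_n^*)$. Then $D$ indexes the tuples of $P$ beginning with $(b_1^*,\ldots,b_n^*)$, and $F$ consists of the coordinate projections $f_z$. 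Because $P$ is $\wedge$-definable, every $t'\in\langle T_V\rangle_{\relb}$ must send the tuple of applications $((u,f_{v_1}),\ldots,(u,f_{y_m}))$ into $P$, and the three properties of $P$ deliver precisely the dichotomy you want (image trapped in $G(\cdots)\neq B$, or the diagonal is an automorphism and $t'$ has the required form). Constructing this gadget is the actual content of the theorem; replacing it by the identity encoding discards the mechanism that transfers diagonal-cautiousness from total polymorphisms to arbitrary elements of the closure.
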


\begin{proof}
Suppose that the polymorphisms of $\relb$ are diagonal-cautious
via $G: B^n \to \wp(B)$.
By Lemma 3.3 of~\cite{Chen14-hardness-surjective-csp},
there exists a relation $P \subseteq B^{(n^n)}$ %
with the following properties:
\begin{enumerate}

\item[(0)] $P$ is $\wedge$-definable.

\item[(1)] For each tuple $(b_1,\ldots,b_n,c,d_1,\ldots,d_m) \in P$,
it holds that each entry of this tuple is in 
$G(b_1, \ldots, b_n)$.

\item[(2)] For each $c \in B$, there exist values
$d_1,\ldots,d_m \in B$
such that $(b_1^*,\ldots,b_n^*,c,d_1,\ldots,d_m) \in P$.

\item[(3)] For each tuple $(b_1,\ldots,b_n,c,d_1,\ldots,d_m) \in P$,
there exists a polymorphism $p^+: B \to B$ of $\relb$
such that $p^+(b_i^*) = b_i$.

\end{enumerate}

We associate the coordinates
of $P$ with the variables $(v_1,\ldots,v_n,x,y_1,\ldots,y_m)$.
In the scope of this proof, when $z$ is one of these variables,
 we use $\pi_z$ to denote the operator that projects a tuple
 onto the coordinate corresponding to $z$.
Let $P'$ be the subset of $P$ that
contains each tuple $q \in P$ such that, 
for each $i = 1, \ldots, n$, it holds that
$\pi_{v_i}(q) = b_i^*$.
Let $q_1, \ldots, q_\ell$ be a listing of the tuples in $P'$.
Let $D = \{ 1, \ldots, \ell \}$, 
and let $F$ contain, for each 
$z \in \{ v_1, \ldots, v_n, x, y_1, \ldots, y_m \}$,
the map 
$f_z: D \to B$ defined by
$f_z(i)= \pi_z(q_i)$.
Observe that $f_x$ is surjective, by (2).

We verify that the structure $\relb$ is $F$-stable, as follows.
Let $V$ be a finite non-empty set.
Observe that for each $g \in G_{V,D}$ and each $u \in V$,
the map $t[g]$ sends the applications
$(u,f_{v_1}), \ldots, (u,f_{v_n})$
to $b_1^*, \ldots, b_n^*$, respectively; thus,
each map in $T_{V,F}$ is surjective.
To show that $T_{V,F}$ is surjectively closed, 
consider a mapping $t' \in \langle T_{V,F} \rangle_{\relb}$.
Observe first that because for any $u_1, u_2 \in V$
(and any $i$),
$(u_1, f_{v_i})$ and $(u_2,f_{v_i})$ are sent to the same value
by any map in $T_{V,F}$, the same holds for any map in 
$\langle T_{V,F} \rangle_{\relb}$, and for $t'$ in particular.
Let $u \in V$; since $P$ is $\wedge$-definable
and the tuple of applications
$$s^! = ((u,f_{v_1}), \ldots, (u,f_{v_n}), (u,f_x),
(u,f_{y_1}),\ldots, (u,f_{y_m}))$$
is pointwise mapped by each $t \in T_{V,F}$ to a tuple in $P'$,
it holds that this tuple is sent by $t'$ to a tuple in $P$.
\begin{itemize}
\item
Case:
Suppose that the restriction of $t'$ to 
$(u,f_{v_1}), \ldots, (u,f_{v_n})$ is not surjective onto $B$.
Then by (1), the restriction of $t'$ to
$(u,f_{v_1}), \ldots, (u,f_{v_n}), (u,f_x),
(u,f_{y_1}),\ldots, (u,f_{y_m})$ has image contained in the set
$G(t'(u,f_{v_1}), \ldots, t'(u,f_{v_n}))$,
which is not equal to $B$ by the definition of diagonal-cautious.
Since $u$ was chosen arbitrarily, we obtain that $t'$ is
not surjective.

\item Case: Suppose that the previous case's assumption does not hold.
We have that the tuple $s^!$ is pointwise mapped by $t'$ to a tuple 
$(b_1, \ldots, b_n, c, d_1, \ldots, d_n)$
in $P$.  Thus, by (3),
there exists a unary polymorphism $p^+$
such that $p^+(b_i^*) = b_i = t'(u,f_{v_i})$, for each $i$.
By this case's assumption, we have $\{ b_1, \ldots, b_n \} = B$; thus,
the operation $p^+$ is a bijection, and hence an automorphism
of $\relb$.  
As $p^+$ and each of its powers is thus a partial polymorphism
of $P$,
it follows that $s^!$ mapped under  $(p^+)^{-1}(t')$ 
is inside $P$ and,
due to the choice of $p^+$, inside $P'$.
Since $u$ was chosen arbitrarily, we obtain that 
$(p^+)^{-1}(t')$ is in $T_{V,F}$, and so $t'$ has the desired form.
\end{itemize}

For any relation $Q \subseteq B^k$ of $\relb$,
consider the $F$-induced relation $Q' \subseteq D^k$ of $\relb$ 
defined by $(Q, (u_1,f_x),\ldots,(u_k,f_x))$.
We have that $Q'$ is the union of
$f_x^{-1}(b_1) \times \cdots \times f_x^{-1}(b_k)$
over all tuples $(b_1, \ldots, b_k) \in Q$.
Moreover, for each $b \in B$, each set $f_x^{-1}(b)$
is an $F$-induced relation.
We may conclude that $\relb^*$, the expansion of
$\relb$ by all constant relations 
(those relations $\{ (b) \}$, over $b \in B$),
has that $\csp(\relb^*)$ reduces to $\scsp(\relb)$.
\end{proof}

As discussed in the previous article~\cite[Proof of Corollary 3.5]{Chen14-hardness-surjective-csp},
when a structure $\relb$ (with non-trivial universe size) has 
only essentially unary polymorphisms, it holds that
the polymorphisms of $\relb$ are diagonal-cautious,
and that the problem $\csp(\relb^*)$ is NP-complete.
With these facts in hand, we obtain the following corollary of the just-given theorem.

\begin{corollary}
\label{cor:eu}
Suppose that $\relb$ is a finite structure whose universe $B$
has size strictly greater than $1$.  
If each polymorphism of $\relb$ is essentially unary,
the problem $\csp(\relb^*)$, and hence the problem $\scsp(\relb)$,
is NP-complete.
\end{corollary}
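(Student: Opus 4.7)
The plan is to combine the theorem of the preceding subsection on diagonal-cautious polymorphism clones with two standard algebraic observations: first, that under the hypothesis that every polymorphism of $\relb$ is essentially unary, the set of polymorphisms of $\relb$ is diagonal-cautious; and second, that under the same hypothesis, $\csp(\relb^*)$ is NP-complete. Granting these, the theorem on diagonal-cautious clones yields a polynomial-time reduction from $\csp(\relb^*)$ to $\scsp(\relb)$; since $\scsp(\relb)$ lies in NP, this establishes NP-completeness of $\scsp(\relb)$ as well.

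For diagonal-cautiousness, I would define $G : B^n \to \wp(B)$ by $G(b_1, \ldots, b_n) = \{b_1, \ldots, b_n\}$. The second condition in the definition of diagonal-cautious is immediate: if $\{b_1, \ldots, b_n\} \neq B$ then $G(b_1, \ldots, b_n) \neq B$ by construction. For the first condition, suppose a polymorphism $f : B^J \to B$ is essentially unary witnessed by a coordinate $j$ and a unary $u : B \to B$, so that $f(h) = u(h(j))$ for every $h : J \to B$. Then $\hat{f}(b) = f(b^J) = u(b)$ for all $b \in B$, and hence
$$\image(f) = \image(u) = \{u(b_1^*), \ldots, u(b_n^*)\} = \{\hat{f}(b_1^*), \ldots, \hat{f}(b_n^*)\} \subseteq G(\hat{f}(b_1^*), \ldots, \hat{f}(b_n^*)).$$

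For the NP-completeness of $\csp(\relb^*)$, I would first observe that the polymorphisms of $\relb^*$ are exactly the idempotent polymorphisms of $\relb$, since preserving each singleton relation $\{(b_i^*)\}$ is equivalent to fixing every $b_i^*$ on the diagonal. If an essentially unary operation $p$ with witnesses $j, u$ is also idempotent, then $u(b) = p(b^J) = b$ for every $b \in B$, so $u$ is the identity and $p$ is the projection onto coordinate $j$. Thus every polymorphism of $\relb^*$ is a projection. The standard algebraic CSP fact, obtained via the Galois connection between polymorphisms and primitive-positive definability, then yields NP-hardness: when the polymorphism clone consists only of projections and $|B| \geq 2$, every relation over $B$ is pp-definable from $\relb^*$, so a gadget encoding a known NP-hard CSP (for example NAE-$3$-SAT when $|B|=2$, or a domain-$|B|$ analogue in general) can be constructed. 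The main potential obstacle is invoking this last pp-definability/hardness step, but it is textbook material and can simply be cited. Combining the two observations with the preceding theorem completes the proof.
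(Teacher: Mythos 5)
Your proposal is correct and follows the same route as the paper: the paper's proof simply cites its precursor article for the two facts you establish (that an essentially unary clone is diagonal-cautious via $G(b_1,\ldots,b_n)=\{b_1,\ldots,b_n\}$, and that $\csp(\relb^*)$ is NP-complete because the polymorphisms of $\relb^*$ are exactly the projections), and then applies the diagonal-cautious theorem exactly as you do. Your spelled-out verifications of both facts are accurate, so the only difference is that you prove what the paper cites.
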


It is known, under the assumption that P does not equal NP,
that the problem $\scsp(\relb)$ for any $2$-element structure $\relb$
is NP-complete if and only if each polymorphism 
of $\relb$ is essentially unary (this follows from~\cite[Theorem 6.12]{CreignouKhannaSudan01-boolean}).
Hence, under this assumption,
the hardness result of Corollary~\ref{cor:eu}
covers all hardness results for the problems $\scsp(\relb)$
over each $2$-element structure $\relb$.

In the remainder of this section, we show that
the previous hardness result~\cite{Chen14-hardness-surjective-csp} rederived here provably does not apply
to the structures $\relc$ and $\reln$ of 
Sections~\ref{subsect:disconn-cut}
and~\ref{subsect:no-rainbow}, respectively.

\begin{theorem}
The set of polymorphisms of the structure $\relc$	
is not diagonal-cautious.
\end{theorem}

\begin{proof}
Suppose that the stated set is diagonal-cautious
via $G: C^4 \to \wp(C)$.
We consider polymorphisms 
$f: C^2 \to C$ of $\relc$
such that the tuple of diagonal values
$(\hat{f}(0),\hat{f}(1),\hat{f}(2),\hat{f}(3))$
is equal to $(0,1,0,1)$.
By the first condition in the definition
of diagonal-cautious,
it must hold for such a polymorphism that
$\image(f) \subseteq G(0,1,0,1)$.  

There exists such a polymorphism having
$2$ in its image,
namely, the operation
$g_2$ defined as follows:
$g_2(a,b)$ is $2$ when $(a,b) = (0,2)$;
$b \mod 2$ when $a=2$ or $b=0$; and,
$1$ otherwise.
Analogously, 
there exists such a polymorphism having
$3$ in its image,
namely, the operation
$g_3$ defined as follows:
$g_3(a,b)$ is $3$ when $(a,b) = (1,3)$;
$b \mod 2$ when $a=3$ or $b=1$; and,
$0$ otherwise.

It follows that $\{ 0, 1, 2, 3 \} = G(0,1,0,1)$,
which contradicts the second condition
in the definition of diagonal-cautious.
\end{proof}

\begin{theorem}
The set of polymorphisms of the structure $\reln$	
is not diagonal-cautious.
\end{theorem}

\begin{proof}
Suppose that the stated set is diagonal-cautious
via $G: N^3 \to \wp(N)$.
It is straightforward to verify that each
non-surjective operation
$f: N^2 \to N$ is a polymorphism of $\reln$
(indeed, each non-surjective operation on $N$
is a polymorphism of $\reln$).
Consider the non-surjective operations
$f: N^2 \to N$ having diagonal $0$, 
that is, such that 
$\hat{f}(0) = \hat{f}(1) = \hat{f}(2) = 0$.
For each such operation, it holds
that $\image(f) \subseteq G(0,0,0)$, by
the first condition
in the definition of diagonal-cautious.
But since both $1$ and $2$ fall into the images
of such operations, it holds that
$G(0,0,0) = \{ 0,1,2 \}$.  
This contradicts the second condition
in the definition of diagonal-cautious.
\end{proof}

\section{Establishing stability}
\label{sect:decidability}
In this section,
we present a decidability result
for stability in the case that $F$ contains only
maps of arity at most $1$.
For each $b \in B$, we use $b_0$ to denote the mapping
from $D^0$ to $B$ that sends the empty tuple to $b$;
we use $B_0$ to denote $\{ b_0 ~|~ b \in B \}$.

\begin{theorem}
\label{thm:bound-num-vars}
Let $\relb$, $D$, and $F$ be as described.
Suppose that $F$ contains only maps of arity $\leq 1$,
and that it
contains $B_0$ as a subset.
The structure $\relb$ is $F$-stable if and only if
when $W$ has size $\leq |B|$, the set
$T_{W,F} = \{ t[g] ~|~ g \in G_{W,D} \}$ is surjectively closed
(over $\relb$).
\end{theorem}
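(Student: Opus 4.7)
The forward direction is immediate: $F$-stability demands surjective closedness of every non-empty $T_V$, and surjectivity of each $t[g]$ is automatic once $B_0 \subseteq F$, since $t[g](\tup{b}) = b$ for every $b \in B$. For the converse, fix a non-empty $V$; by Proposition~\ref{prop:surjectively-closed}, it suffices to show that each surjective partial polymorphism $p : B^{T_V} \to B$ of $\relb$ with domain $\{ \pi_\alpha : \alpha \in A_{V,F} \}$ is automorphism-like.

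The key construction is a restriction operation. For $W \subseteq V$, define $p_W$ on the domain $\{ \pi_\alpha : \alpha \in A_{W,F} \}$ by $p_W(\pi_\alpha) := p(\pi_\alpha)$, viewing $A_{W,F}$ as sitting inside $A_{V,F}$. The key observation is that for $\alpha \in A_{W,F}$ the value $\alpha[g]$ depends only on $g|_W$; this observation simultaneously delivers well-definedness of $p_W$ and the fact that $p_W$ inherits the partial polymorphism property from $p$. Moreover, the diagonal of $p_W$ agrees with $\hat{p}$, because both are carried by the constant applications $\tup{b}$, which sit in every $A_{W,F}$ thanks to $B_0 \subseteq F$. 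A side remark that will be used below: $\hat{p}$ itself is always a unary polymorphism of $\relb$, since applying the partial polymorphism condition to tuples of constants $(\tup{b_1}, \ldots, \tup{b_k})$ directly shows $\hat{p}$ preserves each relation.

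I split on whether $\hat{p}$ is surjective. \emph{Case A: $\hat{p}$ is not surjective.} For each missing value $b \in B \setminus \hat{p}(B)$, surjectivity of $p$ supplies some application $((v_b), f_b) \in A_{V,F}$ with $p$-value $b$. Setting $V_1 = \{ v_b : b \in B \setminus \hat{p}(B) \}$ yields a non-empty set with $|V_1| \leq |B|$ on which $p_{V_1}$ is surjective (constants cover $\hat{p}(B)$, the chosen applications cover the rest). By hypothesis, $T_{V_1}$ is surjectively closed, so $p_{V_1}$ is automorphism-like; but then $\hat{p}$, being the diagonal of an automorphism-like map, is an automorphism of $\relb$, contradicting non-surjectivity. \emph{Case B: $\hat{p}$ is surjective.} Then $\hat{p}$ is a bijective unary polymorphism, hence an automorphism $\gamma$. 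For each $v \in V$, the restriction $p_{\{v\}}$ is surjective (constants alone cover $B$), so by hypothesis ($|\{v\}| = 1 \leq |B|$) it is automorphism-like; matching diagonals forces the governing automorphism to be $\gamma$, and yields some $d_v \in D$ with $p(\pi_{((v),f)}) = \gamma(f(d_v))$ for every arity-$1$ map $f \in F$. Assembling $g_0 \in G_{V,D}$ by $g_0(v) := d_v$, a brief check on constants and on arity-$1$ applications gives $p(\pi_\alpha) = \gamma(\alpha[g_0])$ for every $\alpha \in A_{V,F}$, which is automorphism-likeness via $\gamma$ and $t[g_0]$.

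I expect the main obstacle to be the careful bookkeeping around $p_W$: verifying that it is simultaneously well-defined on its intended domain and closed under each relation of $\relb$. Once this is in place, the dichotomy on $\hat{p}$ is routine, and the bound $|B|$ in the theorem arises precisely because Case A needs at most one witness variable per value missed by the diagonal.
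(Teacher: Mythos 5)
Your proof is correct, and it takes the dual route to the paper's. The paper argues by contraposition on the relational side: from a surjective $t' \in \langle T_V\rangle_{\relb}$ violating surjective closedness it extracts a ``bad'' variable $u'$ (one on which $t'$ fails to look like $\beta(t[h])$) together with a set of at most $|B|-1$ variables whose applications already witness surjectivity of $t'$, and restricts $t'$ to the union of these, contradicting the hypothesis on a set of size at most $|B|$. You instead work on the operational side via Proposition~\ref{prop:surjectively-closed}, and your decomposition is genuinely different: rather than packaging the surjectivity witness and the bad variable into a single restriction, you split on the diagonal. Your Case~A rules out a non-surjective diagonal using only a set of at most $|B|-1$ variables (one per value missed by $\hat{p}$, which is total because $B_0 \subseteq F$ and hence hits at least one value), and your Case~B glues the single-variable restrictions, which are forced to carry one and the same automorphism because they all share the diagonal $\hat{p}$. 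This buys two things: the gluing step, which the paper leaves as a parenthetical remark (``the various mappings $h$ could be combined''), is made completely explicit by the common-diagonal argument; and your argument in fact only invokes the hypothesis for $|W| \le \max(|B|-1,\,1)$, marginally sharper than the bound $|B|$ in the statement. The one point you rightly flag as needing care --- that $p_W$ is well-defined on $\{\pi_\alpha : \alpha \in A_{W,F}\}$ and inherits the partial polymorphism property --- does go through: the map $T_V \to T_W$ sending $t[g]$ to $t[g|_W]$ is a well-defined surjection, along which selections for $p_W$ pull back to selections for $p$; this is the same technical point the paper disposes of by citing Theorem~\ref{thm:galois}.
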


\begin{proof}
The forward direction is immediate, so we prove the backward
direction.
We need to show that, for each non-empty finite set $V$,
the set $T_{V,F} = \{ t[g] ~|~ g \in G_{V,D} \}$ fulfills
the conditions given in Definition~\ref{def:stability}.
Each map in $T_{V,F}$ is surjective since $B_0 \subseteq F$.
We thus argue that $T_{V,F}$ is surjectively closed.
This follows from the assumptions when $|V| \leq |B|$.

When $|V| > |B|$, suppose
(for a contradiction) that $T_{V,F}$ is not surjectively
closed.  There exists a surjective mapping $t'$
in $\langle T_{V,F} \rangle_\relb$ violating the
definition of surjectively closed.
For any variable $v \in V$, 
let $A_v$ be the set of applications that involve either $B_0$
or the variable $v$.  Let us say that $v$ is \emph{well-formed}
if the map $t'$ restricted to $A_v$ has the form
$\beta_v(t[h_v])$ where $\beta_v$ is an automorphism (of $\relb$)
and $h_v$ is a map in $G_{\{v\}, D}$.
We claim that there exists a variable in $V$ that is not well-formed.
Suppose, for a contradiction,
 that each variable $v \in V$ is well-formed;
then, the automorphisms $\beta_v$ over $v \in V$ are are equal
to each other, since each set $A_v$ of applications includes
each application $\alpha_b$
defined as $(\epsilon,b_0)$ for each $b \in B$
(where $\epsilon$ is the empty tuple), and
this application is is mapped by $t'$
to the value $t'(\alpha_b)$ which 
by well-formedness is equal to $\beta_v(\alpha_b[h_v])$
(as $\alpha_b$ begins with the empty tuple, the value $\alpha_b[h_v]$
does not depend on $h_v$).
Let $\beta$ denote the common value of the automorphisms $\beta_v$
over $v \in V$.
Then, the tuple $t'$ is equal to $\beta(t[h])$, where $h \in G_{V,D}$
is the unique map extending each of the maps $(h_v)_{v \in V}$;
this implies that $t'$ does not violate the definition of
surjectively closed (for $T_{V,F}$), contradicting the choice of the map $t'$.
The claim follows: there exists a variable $u' \in V$ that
is not well-formed.

Since $u'$ is not well-formed,
the map $t'$ restricted to $A_{u'}$ 
does not have the form $\beta'(t[h'])$
for an automorphism $\beta'$ and a map $h' \in G_{\{u'\},D}$.
As $B_0 \subseteq F$, it is possible to pick
a set $A^-$ of applications using 
at most $|B|-1$ variables such that
the restriction of $t'$ to $A^-$ is surjective.
Let $U$ be the set of variables appearing in $A^-$;
we have $|U| \leq |B|-1$.
Set $U' = U \cup \{ u' \}$.

It is an immediate consequence of Theorem~\ref{thm:galois}
that the restriction of $t'$ to $A_{U', F}$
is contained in $\langle T_{U',F} \rangle_{\relb}$,
where $T_{U',F} = \{ t[g] ~|~ g \in G_{U',D} \}$.
By the choice of $u'$, we have that
the described restriction of $t'$ 
is not in $\langle T_{U',F} \rangle_{\relb}$;
since this restriction, by the choice of $U$, is surjective,
we obtain that $T_{U',F}$ is not surjectively closed,
a contradiction to the assumption of the backward direction.
\end{proof}

\begin{theorem}
There exists an algorithm that, given a structure $\relb$
and a set $F$ of maps each having arity $\leq 1$ and
where $B_0 \subseteq F$, decides whether or not
$\relb$ is $F$-stable.
\end{theorem}

\begin{proof}
The algorithm checks the condition of Theorem~\ref{thm:bound-num-vars},
in particular, it checks whether or not, for all $i = 1, \ldots, |B|$,
the set $T_{W,F}$ is surjectively
closed for a set $W$ having size $i$.  
\end{proof}

\section{Condensations}
\label{sect:condensations}

Define 
a \emph{condensation} from a relational structure
$\rela$ to a relational structure $\relb$ to be a
homomorphism $h: A \to B$ from $\rela$ to $\relb$
such that $h(A) = B$ and $h(R^{\rela}) = R^{\relb}$
for each relation symbol $R$.  That is, a condensation
is a homomorphism that maps the universe
of the first structure surjectively onto the universe of the second,
and in addition, maps each relation of the first structure
surjectively onto the corresponding relation of the second structure.
When $\relb$ is a structure,
we define the problem $\cond(\relb)$ of deciding whether or not
a given structure admits a condensation to $\relb$,
using the formulation of the present paper,
as follows.
The instances of $\cond(\relb)$ are the instances
of the problem $\scsp(\relb)$;
an instance $(U, \phi)$
is a \emph{yes} instance
if there exists a surjective map $f: U \to B$
that satisfies $\phi$ (over $\relb$) 
such that, for each relation symbol $R$
and each tuple $(b_1, \ldots, b_k) \in R^{\relb}$,
there exists an atom $R(u_1,\ldots, u_k)$
of $\phi$
such that $(f(u_1),\ldots,f(u_k)) = (b_1, \ldots, b_k)$.
Clearly, each \emph{yes} instance of $\cond(\relb)$
is also a \emph{yes} instance of $\scsp(\relb)$.
In graph-theoretic settings, a related notion has been studied:
a \emph{compaction} is defined similarly, but typically
is not required to map onto self-loops of the target graph $\relb$.

There exists an elementary argument that,
for each structure $\relb$, the problem
$\scsp(\relb)$ polynomial-time Turing reduces to the problem
$\cond(\relb)$; see~\cite[Proof of Proposition 1]{BodirskyKaraMartin12-surjectivesurvey}.\footnote{
The proof there concerns the compaction problem,
but is immediately adapted to the present setting.}
We observe here a condition that allows one to show 
NP-hardness 
for the problem $\cond(\relb)$,
under polynomial-time many-one reduction,
 using the developed framework.

\begin{theorem}
\label{thm:cond-hardness}
Let $F$ be an encoding that contains
all constants in the sense that, for each $b \in B$,
there exists a map $f_b \in F$ such that
$f_b(e) = b$ for each $e$ in the domain of $f_b$.
Suppose that $\relb$ is $F$-stable, and that
$\reld$ is an $F$-induced template of $\relb$.
Then, the problem $\csp(\reld)$ polynomial-time
many-one reduces to $\cond(\relb)$.
\end{theorem}

\begin{proof}
The reduction is that of Theorem~\ref{thm:stable-gives-reduction}.
We use the notation from that proof.
It suffices to argue that when $(A_{V,F},\psi)$
is a \emph{yes} instance of $\scsp(\relb)$,
it is also a \emph{yes} instance of $\cond(\relb)$.
From that proof,
when there exists a surjective satisfying assignment of $\psi$,
there exists such a surjective satisfying assignment 
of the form $t[g]$.  
Since $t[g]$ maps each application of the form
$(\cdot,f_b)$ to $b$, we obtain that $(A_{V,F},\psi)$
is a \emph{yes} instance of $\cond(\relb)$.
\end{proof}

Each encoding presented in 
Section~\ref{sect:hardness-results} 
contains all constants; thus, via
Theorem~\ref{thm:cond-hardness}, we obtain that,
for each structure $\relb$ treated in that section,
the problem $\cond(\relb)$ is NP-complete.

\section{Sparsifiability}
\label{sect:sparsifiability}

The framework presented here also has implications
for the sparsifiability of the problems $\scsp(\relb)$.
We explain the main ideas here;
our presentation is based on the 
framework and terminology of~\cite{ChenJP20-sparsifiability-journal}.

For a problem $Q$ of the form $\scsp(\relb)$
or of the form $\csp(\relb)$, 
we define a 
\emph{generalized kernel} as a polynomial-time many-one reduction $f$ from $Q$ to any other decision problem $Q'$.  
We say that a generalized kernel $f$ has 
\emph{bitsize $h: \N \to \N$} 
when, for any instance $x$ of $Q$
having $n$ variables, it holds that
$|f(x)| \leq h(n)$.
That is, we grade each generalized kernel
according to the image size of each
instance of $Q$, measured as a function
of the number of variables of $Q$.

When one has an encoding $F$ of maps each having arity $\leq 1$, and the hypotheses of Theorem~\ref{thm:stable-gives-reduction} hold, this theorem's reduction from
$\csp(\reld)$ to $\scsp(\relb)$
increases the number of variables at most linearly:
for each instance of the first problem having
$n \geq 1$ variables, the instance's image
under the reduction has at most $|F|n$ variables.
This reduction thus constitutes a 
\emph{linear-parameter transformation},
in the parlance of
\cite[Definition 2.2]{ChenJP20-sparsifiability-journal}.  Such transformations
allow one to transfer lower bounds from
the first problem (in our case, $\csp(\reld)$)
to the second problem (in our case, $\scsp(\relb)$).

We establish the following new result.

\begin{theorem}
	The problem $\scsp(\reln)$ (studied in Section~\ref{subsect:no-rainbow})
has a generalized kernel of bitsize $O(n^2 \log n)$, but no 
generalized kernel of bitsize $O(n^{2-\epsilon})$, for any $\epsilon > 0$, under the assumption that NP is not in coNP/poly.
\end{theorem}

\newcommand{\nae}{\mathbf{B}_{\mathsf{NAE}}}

\begin{proof}
Let $\nae$ be a structure 
having as its sole relation
the not-all-equal relation of Theorem~\ref{thm:no-rainbow-np-complete}.
The problem $\csp(\nae)$ does not have a generalized kernel of bitsize $O(n^{2-\epsilon})$ (for any $\epsilon > 0$), under the assumption that NP is not in coNP/poly;
this follows from Theorem 3.10 of~\cite{ChenJP20-sparsifiability-journal}. 
From the linear-parameter transformation
from $\csp(\nae)$ to $\scsp(\reln)$,
one obtains the lower bound of the theorem statement.

It remains to argue that the problem $\scsp(\reln)$ has a generalized kernel of bitsize $O(n^2 \log n)$.
 To show this, we use an adaptation of the
polynomial method pioneered by Jansen and Pieterse~\cite{JansenPieterse19-optimal-sparsification-polynomials}; see also~\cite{ChenJP20-sparsifiability-journal}.
Consider an instance $(V,\phi)$ of the problem $\scsp(\reln)$.
Each constraint has the form $R(v_1,v_2,v_3)$,
where $v_1,v_2,v_3 \in V$; for each constraint,
we create an equation 
$e_{\{v_1,v_2\}} + e_{\{v_1,v_3\}} + e_{\{v_2,v_3\}}
- 1 = 0$, to be interpreted over $F_2$,
the field with $2$ elements.
Here, the variables used 
in equations are those in the set
$\{ e_{\{v,v'\}} ~|~ v, v' \in V, v \neq v' \}$,
which is 
a set of ${|V| \choose 2}$
variables.  For any assignment 
$f: V \to \{ 0, 1, 2 \}$, set the variable
$e_{\{v,v'\}}$ to be $1$ if $f(v) = f(v')$,
and to be $0$ otherwise; then, the assignment
satisfies $R(v_1,v_2,v_3)$ if and only if
its corresponding equation is true over $F_2$.
The solution space of the
created equations thus capture the set of
satisfying assignments.  However, since there are
$O(n^2)$ variables used in these equations, 
by linear algebra, 
a polynomial-time algorithm may find a subset
of these equations having size $O(n^2)$ that has
the same solution space as the original set.
The generalized kernel outputs the constraints
corresponding to these equations; each
constraint can be written in $O(\log n)$
space, and so the claimed upper bound follows.
\end{proof}

\paragraph{Acknowledgements.}  
The author thanks Barny Martin, Dmitriy Zhuk,
and Benoit Larose for
useful discussions; in particular,
the former two are thanked for explanations
concerning their results~\cite{MartinPaulusma15-discut,Zhuk20-norainbow}, respectively.  The author also thanks Victor Lagerkvist and Moritz M\"uller 
for helpful comments.

\bibliographystyle{plain}

\bibliography{/Users/hubie/Dropbox/active/writing/hubiebib.bib}

\end{document}